\newcommand*\patchAmsMathEnvironmentForLineno[1]{%
  \expandafter\let\csname old#1\expandafter\endcsname\csname #1\endcsname
  \expandafter\let\csname oldend#1\expandafter\endcsname\csname end#1\endcsname
  \renewenvironment{#1}%
     {\linenomath\csname old#1\endcsname}%
     {\csname oldend#1\endcsname\endlinenomath}}%
\newcommand*\patchBothAmsMathEnvironmentsForLineno[1]{%
  \patchAmsMathEnvironmentForLineno{#1}%
  \patchAmsMathEnvironmentForLineno{#1*}}%
\theoremstyle{plain}
\newtheorem{lemma}{Lemma}
\newtheorem{theorem}{Theorem}
\newtheorem{proposition}{Proposition}
\theoremstyle{definition}
\newtheorem{definition}{Definition}
\newcommand{\prefs}{\mathcal{P}}
\newcommand{\freq}[1]{\mu_\pi(\succ_{#1})}
\newcommand{\dist}[1]{d(\succ_{#1},\succ_0)}
\newcommand{\switch}{w^{ab}}
\newcommand{\pconfixed}{P^{\text{con-fixed}}}
\newcommand{\pcon}{P^{\text{consensus}}}
\newcommand{\pfcon}{P^{\text{flexible-consensus}}}
\newcommand{\peq}{P^{\text{equal}}}
\begin{document}

\title{Flexible Level-1 Consensus Ensuring Stable Social Choice: Analysis and Algorithms}

\author{Mor Nitzan}

\affiliation{School of Computer Science, The Hebrew University, 
Jerusalem 91904, Israel}

\affiliation{Racah Institute of Physics, The Hebrew University, 
Jerusalem 91904, Israel}

\affiliation{Department of Microbiology and Molecular Genetics, 
Faculty of Medicine, The Hebrew University,
Jerusalem 91120, Israel}

\author{Shmuel Nitzan} 

\affiliation{Department of Economics, Bar-Ilan University, Ramat Gan 5290002, Israel}

\author{Erel Segal-Halevi} 

\affiliation{Department of Computer Science, Ariel University, Ariel 40700, Israel}

\affiliation{Department of Computer Science, Bar-Ilan University, Ramat Gan 5290002, Israel}


\begin{abstract}
\emph{Level-1 Consensus} is a property of a preference-profile. Intuitively, it means that there exists a preference relation which induces an ordering of all other preferences such that frequent preferences are those that are more similar to it. This is a desirable property, since it enhances the stability of social choice by guaranteeing that there exists a Condorcet winner and it is elected by all scoring rules.

In this paper, we present an algorithm for checking whether a given preference profile exhibits level-1 consensus. We apply this algorithm to a large number of preference profiles, both real and randomly-generated, 
and find that level-1 consensus is very improbable. We support 
these empirical findings theoretically, by showing that, under the impartial culture assumption, the probability of level-1 consensus approaches zero when the number of individuals approaches infinity.

Motivated by these observations, we show that the level-1 consensus property can be weakened while retaining its stability implications. We call this weaker property \emph{Flexible Consensus}. We show, both empirically and theoretically, that it is considerably more probable than the original level-1 consensus. In particular, under the impartial culture assumption, the probability for Flexible Consensus converges to a positive number when the number of individuals approaches infinity.

JEL classification number: D71

\textbf{Keywords}: domain restrictions, level-1 consensus, single peakedness.
\end{abstract}
\maketitle
\section{Introduction}
Recently, \citet{mahajne2015} have proposed the concept of \emph{level-1 consensus} of a preference profile, showing that it considerably enhances the stability of social choice. In particular, if a profile exhibits level-$1$ consensus around a given preference-relation $\succ_0$ with respect to the inversion metric, then:\footnote{In fact, \citet{mahajne2015} define a family of  conditions called level-$r$ consensus, where $r$ is an integer
between $1$ and $K!/2$ and $K$ is the number of alternatives%
. 
But for the sake of simplicity, in the present paper we focus on level-$1$ consensus which is the strongest condition in this family.

Note that recently \citet{poliakov2016note} proved that level-$r$ consensus is equivalent to level-$1$ consensus whenever $r\leq (K-1)!$. 
}
\begin{itemize}
\item There exists a Condorcet winner;
\item The Condorcet winner is chosen by all scoring rules;
\item With an odd number of individuals, the majority relation is transitive and coincides with $\succ_0$.
\end{itemize}
The current study starts by considering two questions: 
\begin{enumerate}
\item How can a profile be tested for level-1 consensus?
\item How likely is it that level-1 consensus exists?
\end{enumerate}
Questions of the former type have been recently studied with respect to various domain restrictions. For example,
\citet{escoffier2008single} provide an efficient way to check whether a profile is \emph{single-peaked},
\citet{bredereck2013} provide an efficient way to check whether a profile is \emph{single-crossing},
and
\citet{barbera2011} ask whether the satisfaction of their proposed \emph{top monotonicity} condition (a sufficient condition for an extension of the median-voter theorem to hold) is easy to check. See \cite{elkind2016preference} for a survey of recent developments in algorithmic checking of domain restrictions.

Questions of the latter type have been studied in the social choice literature with respect to various domain restrictions   that guarantee social stability, e.g., the existence of Condorcet winners under the majority rule \cite{Gehrlein1981Expected,tsetlin2003impartial} and the likelihood of single-peaked preferences \cite{lackner2017likelihood}. 


Our answer to the first question is an efficient algorithm for determining whether a preference-profile exhibits level-1 consensus. In case such a consensus exists, the algorithm identifies a preference relation around which it  occurs.

Our answer to the second question is that level-1 consensus is highly improbable. We applied our algorithm on a recently-released dataset of 315 real-world profiles from various sources \citep{MaWa13a} and found that none of them exhibits level-1-consensus. Moreover, experiments performed  on thousands of profiles generated randomly according to Mallows' phi model \citep{mallows1957} revealed that, for a wide range of parameter settings, profiles exhibiting level-1 consensus were extremely rare. As a partial explanation to these findings, we prove that under the standard probabilistic setting of equally-probable preference relations, the probability of level-1 consensus goes to zero when the number of individuals goes to infinity.

Motivated by these results, we found a way to weaken the level-1 consensus property while keeping its stability properties. We call the weakened property \emph{Flexible Consensus}. In the above mentioned dataset, we found that 39 out of 315 profiles exhibit Flexible Consensus. Flexible Consensus is also much more probable in the settings of the randomly-generated profiles we tested.
In particular, under the impartial-culture assumption, the probability of Flexible Consensus is lower-bounded by a positive constant even when the number of individuals goes to infinity.


\section{Definitions}
Let  $A=\{a_1,\dots,a_K \}$ be a set of $K\geq 3$ alternatives
and $N=\{1,\dots,n\}$ a set of voters.
We assume that each voter has a \emph{strict total order} on the alternatives, i.e, for each two different alternatives $a$ and $b$, either the voter strictly prefers $a$ to $b$ or the voter strictly prefers $b$ to $a$. 
Let 
$\prefs{}$ be the subset of strict total orders on $A$.
We will refer to the elements of $\prefs{}$ as \emph{preference relations} or simply as \emph{preferences}. 

A \emph{preference profile} or simply a \emph{profile} is a list $\pi=(\succ_1,\dots, \succ_n)$ of preference relations on $A$. For each $i \in N$, $\succ_i$ is the preference relation of voter $i$. We denote by $\prefs{}^n$ the set of all possible profiles.

Let $\pi=(\succ_1,\dots, \succ_n)$ be a profile. For each preference $\succ \in \prefs{}$, let
$\mu_\pi(\succ):=|\{i \in N: \succ_i = \succ\}| = $
the number of voters whose preference is $\succ$, 
which in this paper is referred to as the \emph{frequency} of $\succ$.

\begin{definition}
The \emph{inversion distance} between two preferences $\succ,\succ'$, denoted $d(\succ,\succ')$, is the number of pairs of alternatives that are ranked differently by the two preferences, i.e, the number of sets $\{a,b\}\subseteq  A$ such that $a\succ b$ and $b\succ' a$ or vice-versa.
\end{definition}
The inversion-distance is a metric on $\prefs$ \cite{kemeny1962mathematical}. It can vary between $0$ and ${K\choose 2}$, the number of subsets of two alternatives.

For example, if there are three alternatives and $a_1\succ a_3 \succ a_2$ and $a_2 \succ' a_3 \succ' a_1$, then $d(\succ,\succ')=3$ since all three pairs of alternatives are ranked differently by $\succ$ and $\succ'$.

The following definition is due to \citet{mahajne2015}.
\begin{definition}
\label{def:consensus}
Let $\succ_0 \in \prefs$.
A profile $\pi \in \prefs^n$ exhibits \emph{consensus of level-$1$ around $\succ_0$} if the following two conditions hold:
\begin{enumerate}
\item
For all pairs of preferences $\succ, \succ' \in \prefs{}$, $d(\succ,\succ_0) \leq  d(\succ',\succ_0)$ implies $ \mu_\pi(\succ) \ \geq \ \mu_\pi(\succ ')$.
\item 
There is at least one pair $\succ ,\succ' \in \prefs{}$, such that $d(\succ,\succ_0) < d(\succ',\succ_0)$ and $\mu_\pi(\succ) \ > \ \mu_\pi(\succ ')$.
\end{enumerate}
\end{definition}

\section{Algorithm for Detecting Level-1 Consensus}
\label{sec:algorithm-1}

\begin{figure}
\begin{algorithm}[H]
\begin{flushleft}
INPUT: 
\begin{itemize}
\item A set of $K$ alternatives, $A = \{a_1,\ldots,a_K\}$.
\item A profile $\pi$ containing $n$ preference-relations on $A$ (possibly with duplicates).
\end{itemize}

OUTPUT:
\begin{itemize}
\item If $\pi$ exhibits level-1 consensus around some preference $\succ_0$, return  $\succ_0$. 
\item Otherwise, return ``no consensus''.
\end{itemize}

ALGORITHM:
\begin{enumerate}
\item Calculate the frequency $\mu_\pi(\succ)$ of each preference $\succ\in\pi$.
\item Define $n'=$ number of distinct preferences in $\pi$.
\item If all frequencies are equal and $n'=K!$, return ``no consensus; Condition~2 violated''.
\item Order the preferences by descending frequency:
$\mu_\pi(\succ_1) ~\geq~ \mu_\pi(\succ_2) ~\geq~ \cdots~\geq~ \mu_\pi(\succ_{n'})$.
\item Set $M := \freq{1} $ to be the maximum frequency of a preference relation.
\item For $j = 1, 2, \ldots n'$ \  while $\mu_\pi(\succ_j) = M$: \\
Check if Condition~1 is satisfied for $\pi$ and  $\succ_j$ (see Algorithm \ref{alg:condition1}). If it is, then return $\succ_j$.
\item Return ``no consensus; Condition~1 is violated for all candidates''.
\end{enumerate}
\end{flushleft}
\caption{
\label{alg:level1consensus}
Detection of Level-1 Consensus 
}
\end{algorithm}
\vskip -8mm
\end{figure}

\begin{figure}
\begin{algorithm}[H]
\begin{flushleft}
INPUT: 
\begin{itemize}
\item A profile $\pi$ containing $n$ preference-relations on $A$ (possibly with duplicates).
\item A preference $\succ_0$.
\end{itemize}

OUTPUT: ``True'' if Condition~1 is satisfied for profile $\pi$ with respect to $\succ_0$. ``False'' otherwise.

ALGORITHM:

\begin{enumerate}
\item Within every group of preferences with the same frequency, order the preferences by ascending inversion-distance from $\succ_0$.
\item For $i = 1, 2, \ldots, n'-1$:
\\
If $\freq{i} ~ > ~ \freq{i+1}$  and  $\dist{i} \geq \dist{i+1}$,
 return False.
\item Set $\widehat{d}:=\dist{n'}$.
\hskip 1cm
If $\sum_{j=0}^{\widehat{d}} |T(K,j)| = n$, return True. 
\hskip 1cm
Else, return False.
\end{enumerate}
\end{flushleft}
\caption{
\label{alg:condition1}
Check if Condition~1 is satisfied
}
\end{algorithm}
\end{figure}

Given a profile $\pi\in \prefs{}^n$, 
we would like to check whether there exists some preference-relation $\succ_0$ such that $\pi$ exhibits level-1 consensus around it. 
Our solution is given by Algorithm \ref{alg:level1consensus} in page \pageref{alg:level1consensus}
(using, as a sub-routine,  Algorithm \ref{alg:condition1} in page \pageref{alg:condition1}).
\begin{theorem}
\label{thm:level1consensus}
Algorithm \ref{alg:level1consensus} checks whether a profile $\pi$ exhibits level-1 consensus in time:
\begin{align*}
O(n'^2 K\sqrt{\log K} + n'^2 \log{n'})
\end{align*}
where $n'=\min(K!,n)$.
In particular, the run-time is polynomial in the profile size.
\end{theorem}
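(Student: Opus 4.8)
The plan is to verify, step by step, that Algorithm~\ref{alg:level1consensus} is correct, and then to bound the running time of each step. Correctness rests on two simple structural observations. First, if a profile exhibits level-1 consensus around $\succ_0$, then $\succ_0$ must itself be a preference of maximum frequency: by Condition~1, since $d(\succ_0,\succ_0)=0$ is minimal, $\mu_\pi(\succ_0)$ is at least as large as every other frequency. Hence it suffices to try only the (at most $n'$) preferences attaining the maximum frequency $M$ as candidates for $\succ_0$, which is exactly what the main loop in step~6 does; step~3 handles the degenerate case where Condition~2 fails outright. Second, I must argue that Algorithm~\ref{alg:condition1} correctly checks Condition~1 for a fixed candidate $\succ_0$: after sorting each frequency-class by ascending inversion distance, Condition~1 holds if and only if (a) whenever two preferences have strictly different frequencies the more frequent one is strictly closer to $\succ_0$ — equivalently, the sequence of distances along the sorted list never fails to strictly decrease at a frequency drop, which is the test in step~2 — and (b) the set of preferences with positive frequency is exactly a ``ball'': it must consist of \emph{all} preferences within some inversion radius $\widehat d$ of $\succ_0$, since any preference at distance $\le\widehat d$ not appearing in $\pi$ would have frequency $0<\mu_\pi(\succ_{n'})$ while being no farther from $\succ_0$, violating Condition~1. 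This last closure condition is precisely the count check $\sum_{j=0}^{\widehat d}|T(K,j)|=n'$ in step~3, where $|T(K,j)|$ is the number of permutations of $K$ elements at inversion distance $j$ from a fixed permutation (the Mahonian numbers). I would note that the inequality $n'\le\sum_{j=0}^{\widehat d}|T(K,j)|$ always holds, so equality is the right test; I should also double-check the boundary interaction between steps~2 and~3, since a preference of frequency equal to $\mu_\pi(\succ_{n'})$ but outside the ball is caught by step~3, not step~2.

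For the running time, I would account for each line. Computing frequencies and the number $n'$ of distinct preferences (steps~1--2 of Algorithm~\ref{alg:level1consensus}) takes $O(n' K)$ after bucketing, and the equality check in step~3 is $O(n')$. Sorting the $n'$ distinct preferences by frequency (step~4) costs $O(n'\log n')$. The main loop runs at most $n'$ times; each iteration invokes Algorithm~\ref{alg:condition1}. Inside that subroutine, the dominant cost is computing, for the fixed $\succ_0$, the inversion distance to each of the $n'$ distinct preferences and then sorting within frequency-classes: each distance computation is the number of discordant pairs between two permutations of length $K$, computable in $O(K\sqrt{\log K})$ via a reduction to counting inversions with a van~Emde~Boas / fast integer-sorting subroutine (this is where the $\sqrt{\log K}$ factor enters), and the sorting of the $n'$ distances is $O(n'\log n')$. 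So one call to Algorithm~\ref{alg:condition1} is $O(n' K\sqrt{\log K} + n'\log n')$; multiplying by the at-most-$n'$ iterations of the loop gives $O(n'^2 K\sqrt{\log K} + n'^2\log n')$, matching the claimed bound. Finally, the Mahonian numbers $|T(K,j)|$ needed in step~3 of Algorithm~\ref{alg:condition1} can be precomputed once by the standard recurrence in time polynomial in $K$, which is dominated by the above, so the precomputation does not affect the stated asymptotics.

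Since $n'=\min(K!,n)$, we have $n'\le n$ and $n'\le K!$, so $n'$ is polynomially bounded in the input size (the profile has $n$ voters each specifying a ranking of length $K$); substituting either bound shows the run-time is polynomial in the profile size, giving the ``in particular'' clause.

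I expect the main obstacle to be the correctness of Algorithm~\ref{alg:condition1}, specifically pinning down exactly why the two local tests (the monotone-distance-drop test in step~2 and the ball-size count in step~3) are jointly equivalent to the global Condition~1. The subtlety is that Condition~1 quantifies over \emph{all} pairs in $\prefs$, including preferences absent from $\pi$ (frequency~$0$), so the proof must carefully handle the preferences just outside the support; getting the boundary case at distance $\widehat d$ exactly right — and confirming that the Mahonian count is the correct quantity to compare against — is the crux. The timing analysis is routine by comparison, the only non-obvious ingredient being the $O(K\sqrt{\log K})$ inversion-counting bound, which I would cite rather than reprove.
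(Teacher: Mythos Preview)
Your proposal is correct and follows essentially the same approach as the paper: you identify that any consensus center must have maximum frequency, reduce Condition~1 to a local adjacent-pair check plus a Mahonian ball-count for the zero-frequency preferences (exactly the paper's Lemma~\ref{lem:cond1} and the discussion around it), and then bound the running time by costing Algorithm~\ref{alg:condition1} at $O(n'K\sqrt{\log K}+n'\log n')$ per call and multiplying by at most $n'$ calls, citing the fast inversion-counting algorithm for the $K\sqrt{\log K}$ factor. One wording slip: where you write that distances ``strictly decrease at a frequency drop'' you mean strictly \emph{increase} (consistent with your own preceding clause ``the more frequent one is strictly closer''), but this does not affect the argument.
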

This entire section is devoted to proving Theorem \ref{thm:level1consensus}. We first explain why Algorithm \ref{alg:level1consensus} is correct. Then we analyze its run-time.

Algorithm \ref{alg:level1consensus} is based on the simple observation that the two conditions in Definition \ref{def:consensus} are equivalent to the following:
\begin{quote}
(Condition~1)
~~
For all $\succ, \succ' \in \prefs{}$,~~if~~$\mu_\pi(\succ') >  \mu_\pi(\succ)$,~
then~$d(\succ', \succ_0)  <  d(\succ, \succ_0)$.
\end{quote}

\begin{quote}
(Condition~2)
~~~
There exists a pair $\succ, \succ' \in \prefs{}$ such that $\mu_\pi(\succ') \ > \ \mu_\pi(\succ)$.
\end{quote}

The algorithm proceeds in several steps.

First, we calculate the frequency $\mu_\pi(\succ)$ of each of the  preferences $\succ \in \pi$. Let $n'$ be the number of distinct preferences in $\pi$. Note that $n'\leq n$ and also $n'\leq K!$, since with $K$ alternatives there are at most $K!$ possible preferences. 
Now Condition~2 is easily checked: it is satisfied if-and-only-if (a) there exists a pair of preferences in $\pi$ with different frequencies, or (b) $n'<K!$ (since this implies that there exists a preference not in $\pi$ with frequency 0).

If Condition~2 is satisfied, it only remains to check whether Condition~1 is satisfied as well.

We order the preferences in descending order of $\mu_\pi(\succ)$, and rename them $\succ_1, \succ_2, \ldots, \succ_{n'}$, such that $\freq{1} \geq \freq{2} \geq \cdots\geq \freq{n'}$. This enables us to identify the candidates for level-1-consensus. Since $d(\succ_0,\succ_0)=0$, Condition~1 immediately implies that each candidate for level-1 consensus must be a preference with maximal frequency. 
So, the candidates are  $\succ_1, \succ_2, \ldots, \succ_{h}$ such that $h \leq n'$ is the largest index for which $\freq{1} = \freq{2} = \cdots = \freq{h}$.

Now we can directly check Condition~1. This condition should be checked separately for each candidate preference $\succ_0$.
Given a candidate-preference $\succ_0$, we can calculate its inversion-distance from each preference $\succ_i \in \pi$ , $\dist{i}$.
Now, we represent the profile $\pi$ relative to $\succ_0$ in the form of a scatter-plot, which will lead to a straight-forward assessment of the profile's consensus status.
Our scatter-plot is a plot whose x-axis denotes the distance $\dist{i}$ and whose y-axis denotes the  frequency $\freq{i}$.
Note that there may be several different preference relations $\succ_i$ with the same frequency,
$\freq{i}=m$. Therefore, for each integer value on the y-axis of the scatter-plot, $m$, we may have several corresponding values on the x-axis, which can be represented by a horizontal segment whose maximum and minimum borders are given by $\max_{i:\freq{i}=m} ~ \dist{i}$ and $\min_{i:\freq{i}=m} ~ \dist{i}$, respectively.

Condition~1 above requires that, for every two frequencies $m_1>m_2$, \emph{all} preferences with frequency $m_1$ are closer to $\succ_0$ than \emph{all} preferences with frequency $m_2$:
$\max_{i:\freq{i}=m_1} ~ \dist{i} < \min_{i:\freq{i}=m_2} ~ \dist{i}$.
Graphically (see Figure \ref{fig:scatters}), this means that when we scan the scatter plot from top to bottom, we must see non-overlapping intervals ordered strictly from left to right.

Three examples are shown in Figure \ref{fig:scatters}. The left example is positive: there are 5 non-overlapping intervals (two of which consist of a single point), and when they are scanned from top to bottom, they are ordered strictly from left to right. Therefore Condition~1 holds. The middle and right examples are negative: the second and third intervals from the top overlap. For instance, in the middle example the overlap is in a single point, $x=2$. This point corresponds to two distinct preferences with different frequencies (5 and 4), both of which are at distance 2 from the candidate preference; these preferences violate Condition~1. 

\begin{figure}
\begin{center}
\includegraphics[width=.3\textwidth]{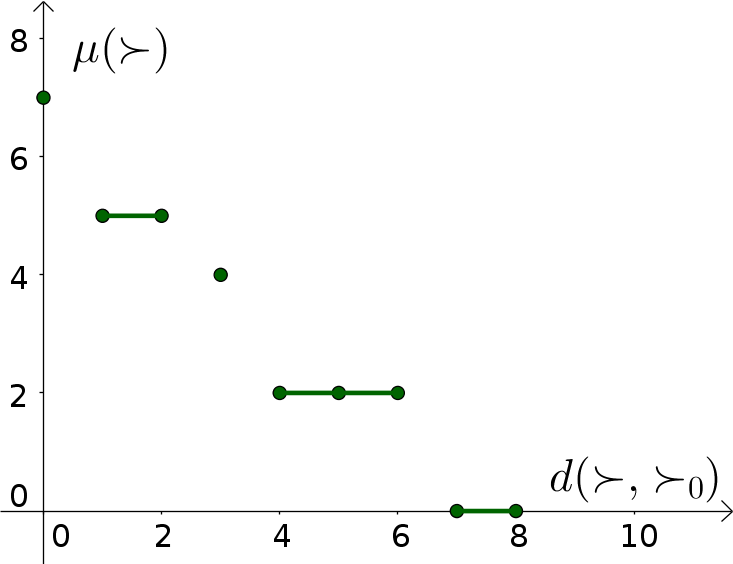}
\hskip .5cm
\includegraphics[width=.3\textwidth]{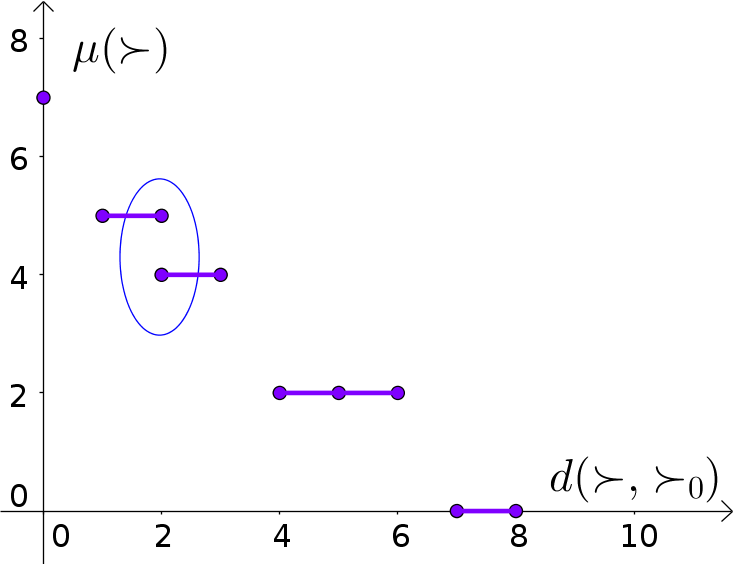}
\hskip .5cm
\includegraphics[width=.3\textwidth]{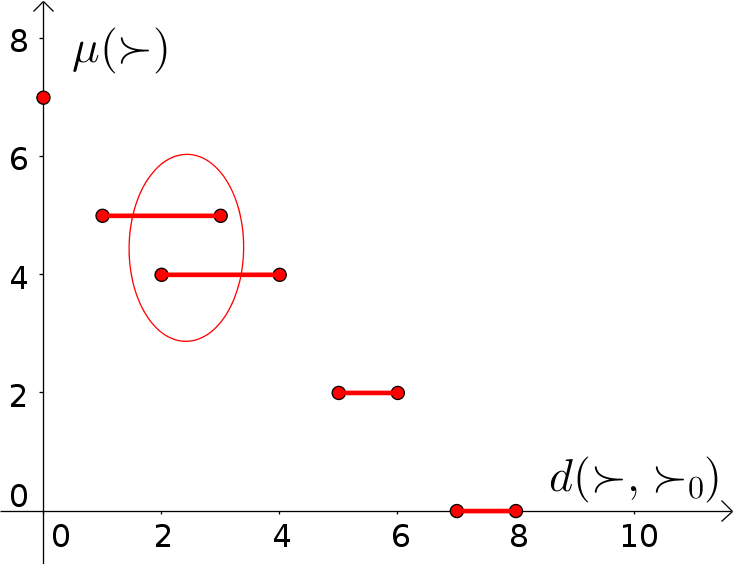}
\end{center}
\caption{\label{fig:scatters} Scatter-plot representation of profiles relative to a candidate preference enables to graphically check whether level-1 consensus around that candidate is satisfied. The left plot satisfies Condition~1 since the horizontal segments are decreasing and non-overlapping. The middle and right plots violate Condition~1 since there are overlapping segments (circled).
}
\end{figure}

The process of 'scanning the scatter plot from top to bottom' can be formalized as follows (see Algorithm \ref{alg:condition1}).
Order the list of preferences 
lexicographically by two criteria: the primary criterion is descending frequency (as before), and the secondary criterion is ascending distance. 
So the preferences are partitioned to equivalence classes by their frequency: the classes are ordered by descending frequency, and within  each equivalence-class, the preferences are ordered by ascending distance from $\succ_0$. Preferences with both the same frequency and the same distance are ordered arbitrarily.
Under this ordering, the following lemma holds:
\begin{lemma}
\label{lem:cond1}
If Condition~1 is violated for \emph{any} pair of preference relations in $\pi$, then it is violated for an adjacent pair $\succ_{i},\succ_{i+1}$ for some $i$.
\end{lemma}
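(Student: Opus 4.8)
The plan is to argue by contradiction along the sorted list. First I would restate what a violation means in terms of the indexing $\succ_1,\dots,\succ_{n'}$: a violating pair is a pair of indices $i<j$ (the strictly-higher-frequency preference must come first in the ordering, since the primary sort key is descending frequency) with $\freq{i} > \freq{j}$ and $\dist{i} \ge \dist{j}$. Assuming such a pair exists, I would suppose toward a contradiction that no \emph{adjacent} pair violates Condition~1, and then derive a contradiction by examining the sub-chain $\succ_i,\succ_{i+1},\dots,\succ_j$, which is contiguous precisely because the higher-frequency member of the violating pair precedes the lower-frequency one.

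The key step is to show that distances are non-decreasing along this sub-chain, i.e.\ $\dist{k} \le \dist{k+1}$ for every $k\in\{i,\dots,j-1\}$. This splits into two cases depending on how consecutive entries compare in frequency. If $\freq{k} = \freq{k+1}$, then $\succ_k$ and $\succ_{k+1}$ lie in the same frequency equivalence-class, and the secondary sort key (ascending distance) gives $\dist{k}\le\dist{k+1}$ directly. Otherwise $\freq{k} > \freq{k+1}$ (it cannot be smaller, by the primary sort), and since by assumption $(\succ_k,\succ_{k+1})$ is not a violating pair, Condition~1 forces $\dist{k} < \dist{k+1}$. Either way $\dist{k}\le\dist{k+1}$, so telescoping yields $\dist{i}\le\dist{i+1}\le\cdots\le\dist{j}$.

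Combining this with the hypothesis $\dist{i}\ge\dist{j}$ forces all of $\dist{i},\dots,\dist{j}$ to be equal. On the other hand $\freq{i} > \freq{j}$, so the frequency must strictly decrease at some step: there is an index $k\in\{i,\dots,j-1\}$ with $\freq{k} > \freq{k+1}$. For that $k$ we have $\dist{k} = \dist{k+1}$, hence $\dist{k}\ge\dist{k+1}$ together with $\freq{k}>\freq{k+1}$ — that is, $(\succ_k,\succ_{k+1})$ is an adjacent violating pair, contradicting our assumption. Hence an adjacent violating pair must exist.

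I do not expect a serious obstacle here; the only point requiring care is the bookkeeping in the two-case analysis — making sure the ``equal frequency'' and ``strictly smaller frequency'' cases together exhaust every consecutive step and that each case yields the inequality in the right direction — together with the initial observation that the higher-frequency member of a violating pair necessarily precedes the lower-frequency one in the lexicographic order, so that we really do obtain a contiguous sub-chain on which to run the telescoping argument.
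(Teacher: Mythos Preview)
Your proof is correct. The approach differs from the paper's: where the paper runs an induction on the gap $j-i$, at each step locating a violating pair with strictly smaller gap (via a four-way case split on whether $i$ is maximal in its frequency class, whether $j$ is minimal in its class, and otherwise comparing $\dist{i}$ with $\dist{i+1}$), you instead argue globally by contradiction. Assuming no adjacent violation, you extract from the lexicographic sort that distances are weakly increasing along the entire block $\succ_i,\dots,\succ_j$, then collapse this to equality using $\dist{i}\ge\dist{j}$, and finally locate the single step where frequency strictly drops to exhibit the contradiction. Your route avoids the case enumeration and is arguably tidier; the paper's inductive reduction is more constructive in that it explicitly walks the violating pair inward. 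One minor remark: once you have established $\dist{k}<\dist{k+1}$ whenever $\freq{k}>\freq{k+1}$, the equality $\dist{k}=\dist{k+1}$ at a frequency-drop step is already the contradiction directly, so the final re-derivation of ``$(\succ_k,\succ_{k+1})$ is a violating pair'' is redundant, though harmless.
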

The lemma is easy to understand based on the graphical criterion outlined above. A formal proof is provided in Appendix \ref{sec:proof1}.

Lemma \ref{lem:cond1} implies that in order to ensure that Condition~1 is satisfied for all preferences in $\pi$, it is sufficient to scan the ordered list of preferences from $\succ_1$ to $\succ_{n'}$, and check if there is some $i$ such that $\freq{i} > \freq{i+1}$ and $\dist{i}\geq \dist{i+1}$.

If Condition~1 holds for all preferences in $\pi$, it remains to check that it holds for preferences not in $\pi$, i.e, preferences with zero frequency. Let 
$\widehat{d}$ be the largest distance from a preference in $\pi$ to $\succ_0$, i.e, $\widehat{d} = \dist{n'}$.
Condition~1  implies that, if $\mu_\pi(\succ) = 0$, then $d(\succ,\succ_0) >  \widehat{d}$. Therefore, we have to check that the distances between $\succ_{0}$ and preferences not in $\pi$ are all larger than $\widehat{d}$. Equivalently, we can ensure that all preferences with distance at most $\widehat{d}$ are in $\pi$. This can be checked by calculating the \emph{number} of possible preferences with distance at most $\widehat{d}$, and verifying that it is equal to the total number $n'$ of distinct preferences in $\pi$. Since this number involves all possible preferences, it does not depend on the candidate $\succ_0$. Therefore, we can calculate this number assuming w.l.o.g that $\succ_0$ is the preference defined by $1 \succ_0 2 \succ_0 \ldots \succ_0 K$. So we have to calculate the number of permutations on $K$ elements with at most $\widehat{d}$ \emph{inversions} (out-of-order elements). This can be written as:
\begin{align*}
\sum_{j=0}^{\widehat{d}} |T(K,j)|
\end{align*}

where $T(K,j)$ is the set of permutations on $K$ elements with \emph{exactly} $j$ inversions.%
\footnote{
$|T(K,j)|$ is also known as the \emph{Mahonian number}; see OEIS sequence A008302, https://oeis.org/A008302.
}

The number $|T(K,j)|$ can be calculated using the following recurrence relation%
\begin{itemize}
\item $\forall K: ~ |T(K,0)|=1$, since there is exactly one permutation with zero inversions --- the identity permutation.
\item $\forall j: ~ |T(0,j)|=0$, since there are no permutations with 0 elements.
\item $\forall K,j: ~ |T(K,j)| = \sum_{i=0}^{\min(K-1,j)} |T(K-1,j-i)|$:
for any permutation of $1,\ldots,K$ with $j$ inversions, let $i$ be the number of elements that come after element $K$ in that permutation. Since $K$ is larger than all other elements, there are exactly $i$ inversions involving $K$. Therefore, if we remove $K$, we get a permutation of $1,\ldots,K-1$ with exactly $j-i$ inversions. By summing the counts of these permutations for all possible values of $i$ (namely, $i\geq 0, i\leq j, i\leq K-1$) we get $|T(K,j)|$.
\end{itemize}

The algorithm for detecting Level-1 Consensus is summarized in Algorithm \ref{alg:level1consensus} in page \pageref{alg:level1consensus}.

We complete the proof of Theorem \ref{thm:level1consensus} by a run-time analysis.
\begin{lemma}
The run-time of Algorithm \ref{alg:level1consensus} is:
\begin{align*}
O(n'^2 K\sqrt{\log K} + n'^2 \log{n'})
\end{align*}
where $n'\leq \min(K!,n)$ is the number of distinct preferences in $\pi$.
\end{lemma}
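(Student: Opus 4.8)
The plan is to walk through Algorithm~\ref{alg:level1consensus} and its subroutine Algorithm~\ref{alg:condition1} step by step, show that the whole running time is governed by two contributions --- computing inversion-distances and performing comparison sorts --- and that both get multiplied by the number of candidate preferences, which is at most $n'$. Throughout I work in the word-RAM model with unit-cost arithmetic, consistent with the inversion-counting bound used below; recall $n'\le\min(K!,n)$ and that the paper assumes $K\ge 3$, so $\sqrt{\log K}\ge 1$.

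First the preprocessing in Algorithm~\ref{alg:level1consensus}. Step~1 groups the $n$ input preferences and records their multiplicities; sorting them lexicographically does this in $O(nK\log n)$ time, which does not exceed the claimed bound whenever $n\le K!$ (so $n'=n$), since then $nK\log n = n'K\log n'\le n'^2K\sqrt{\log K}$. After this, every quantity is a function of the $n'$ distinct preferences alone. Steps~2--5 (counting the distinct preferences, the all-equal-frequencies test, sorting by descending frequency, and reading off the maximum frequency $M$) cost $O(n'+n'\log n')=O(n'\log n')$. Step~6 loops over the candidates --- the $h\le n'$ preferences of frequency exactly $M$ --- invoking Algorithm~\ref{alg:condition1} on each. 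Hence it suffices to bound one call of Algorithm~\ref{alg:condition1} by $O(n'K\sqrt{\log K}+n'\log n')$ and multiply by $h\le n'$.

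Now one call of Algorithm~\ref{alg:condition1} with candidate $\succ_0$. Relabel the alternatives by their rank in $\succ_0$; this makes $\succ_0$ the identity order and turns each $\succ_i\in\pi$ into a permutation $\sigma_i$ of $\{1,\dots,K\}$ with $d(\succ_i,\succ_0)$ equal to the number of inversions of $\sigma_i$ (a pair is ranked differently by $\succ_i$ and $\succ_0$ iff the corresponding ranks are out of order in $\sigma_i$). Building the relabeling and each $\sigma_i$ is $O(K)$, and counting the inversions of a length-$K$ array takes $O(K\sqrt{\log K})$ time (the best known word-RAM bound for inversion counting), so computing all $n'$ distances costs $O(n'K\sqrt{\log K})$. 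Sorting the preferences within each equal-frequency class by ascending distance (Step~1) is $O(n'\log n')$, and the linear scan for an adjacent violation (Step~2) is $O(n')$. Together with the candidate loop this already yields the two stated terms $O(n'^2K\sqrt{\log K}+n'^2\log n')$.

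It remains to bound Step~3 of Algorithm~\ref{alg:condition1}, the evaluation of $\sum_{j=0}^{\widehat d}|T(K,j)|$, which is the one place where a careless implementation could dominate, so this is the point I expect to need a small argument. Since the Mahonian numbers $|T(K,j)|$ do not depend on $\succ_0$, I would tabulate them once at the start using the stated recurrence together with a running prefix-sum array for each row, so that each table entry costs $O(1)$. Moreover one only needs the columns $j=0,1,2,\dots$ up to the first index at which the partial sum $\sum_{j'\le j}|T(K,j')|$ exceeds $n'$: since $|T(K,j)|\ge 1$ for every $0\le j\le\binom K2$ (and $=0$ beyond), this partial sum is strictly increasing on $[0,\binom K2]$ and reaches $n'$ within at most $n'$ columns, and if $\widehat d$ would run past that point the Step~3 equality necessarily fails. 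Hence the table has $O(K\min(n',\binom K2))=O(Kn')$ entries and is built in $O(Kn')$ time, after which each candidate's query is an $O(1)$ prefix-sum lookup. Summing all contributions, $O(nK\log n)+O(Kn')+O(n'^2K\sqrt{\log K}+n'^2\log n')=O(n'^2K\sqrt{\log K}+n'^2\log n')$, which is the assertion; polynomiality in the profile size is then immediate from $n'\le\min(K!,n)$. The only genuinely non-routine ingredient is the $O(K\sqrt{\log K})$ inversion-count, which I would cite as a black box; the rest is bookkeeping, the only subtlety being the early-termination observation that keeps the Mahonian-number table small.
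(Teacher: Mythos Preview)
Your argument follows the same decomposition as the paper's: bound one invocation of Algorithm~\ref{alg:condition1} by $O(n'K\sqrt{\log K}+n'\log n')$ (computing $n'$ inversion-distances via the $O(K\sqrt{\log K})$ black box, then sorting), and multiply by the at most $n'$ candidate calls in Step~6. You are in fact more careful than the paper in two places it glosses over entirely: the paper's proof says nothing about the preprocessing Steps~1--5 of Algorithm~\ref{alg:level1consensus}, and nothing about the cost of the Mahonian-number sum in Step~3 of Algorithm~\ref{alg:condition1}, whereas you bound both (your early-termination trick for the $|T(K,j)|$ table is a nice touch).

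One small slip worth flagging: the implication ``$n\le K!$, so $n'=n$'' is false, since duplicates can make $n'<n$ regardless of how $n$ compares to $K!$; hence your absorption of the $O(nK\log n)$ term into the final bound is not fully justified. But this is really a defect of the lemma as stated rather than of your argument: whenever $n'<n$, Step~1 must still read all $n$ preferences, so no bound depending only on $n'$ and $K$ can literally cover it. The paper simply does not address Step~1 at all; your attempt to do so is more honest, even if the case split does not quite close.
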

\begin{proof}
We first analyze Algorithm \ref{alg:condition1}. It has to calculate the distance between $\succ_0$ and each of the other $n'-1$ preferences. Calculating the inversion distance between a given pair of preferences can be done by a recently-developed algorithm \cite{Chan2010Counting} with a runtime of $O(K\sqrt{\log K})$. We then have to  order the $n'$ distinct preferences and then scan them from top to bottom. Ordering $n'$ items can be done in time $O(n' \log {n'})$. The value of $n'$ is at most the maximum of $n$ (the number of voters) and $K!$ (the number of possible preferences). 
So the run-time of Algorithm \ref{alg:condition1} is 
\begin{align*}
O(n' K\sqrt{\log K} + n' \log{n'})
\end{align*}
As will be explained in the next section, the probability of having two preferences exhibiting exactly the same frequency is low, so in most cases we will have to 
apply Algorithm \ref{alg:condition1}
only once. However, in the improbable case in which there are many preferences with the same frequency, we would have to apply it at most $n'$ times. Therefore, the worst-case run-time of Algorithm
\ref{alg:level1consensus} is $n'$ times the run-time of Algorithm \ref{alg:condition1}.
\end{proof}


\section{Probability of Level-1 Consensus}
\label{sec:prob}
Equipped with a procedure for checking level-1 consensus, we set out to check how likely is this property in various settings. We conducted several simulation experiments.

In the first experiment we used the PrefLib database \citep{MaWa13a}, an online database of real-world preference-profiles collected from various sources. This database contains 315 full profiles, with different numbers of alternatives and voters; see Table \ref{tab:preflib} for statistics. For each profile, we used the algorithm described in the previous section to check whether there exists a 1-level consensus. The results were striking: none of the 315 profiles exhibited a level-1 consensus.

\begin{table}
\begin{tabular}{|c|c|c|c|c|}
\hline 
\textbf{Code} & \textbf{Description} & \textbf{\# profiles} & \textbf{\# alternatives} & \textbf{\# voters} \\ 
\hline 
\shortstack{ED-00004 1--100: \\Netflix Prize Data \cite{Bennett-Lanning-2007-Netflix}} &
\shortstack{Rankings of movies\\by consumers.} &
100 & 3 & 100--1000 \\ 
\hline 
\shortstack{ED-00004 101-200: \\Netflix Prize Data \cite{Bennett-Lanning-2007-Netflix}} &
\shortstack{Rankings of movies\\by consumers.} &
100 & 4 & 100--1000 \\ 
\hline 
\shortstack{ED-00006: \\Skate Data} &
\shortstack{Ranking of skaters\\by judges in competitions.} & 
20 & 10--25 & 8--10  \\ 
\hline 
\shortstack{ED-00009: \\AGH Course Selection} & 
\shortstack{Ranking of courses\\by university students.} & 
2 & 7--9 & $\approx150$ \\ 
\hline 
\shortstack{ED-00011: \\ Web Search} & 
\shortstack{Ranking of search-phrases\\by search-engines.} & 
3 & 100--250 & $5$  \\ 
\hline 
\shortstack{ED-00012: \\ T shirt} & 
\shortstack{Ranking of T-shirt designs\\by researchers.} & 
1 & 10 & $30$  \\ 
\hline 
\shortstack{ED-00014: \\ Sushi Data} & 
\shortstack{Ranking of sushi kinds\\by consumers.} & 
1 & 10 & $5000$  \\ 
\hline 
\shortstack{ED-00014: \\ Sushi Data} & 
\shortstack{Ranking of sushi kinds\\by consumers.} & 
1 & 10 & $5000$  \\ 
\hline 
\shortstack{ED-00015: \\ Clean Web Search} & 
\shortstack{Ranking of search-phrases\\by search-engines.} & 
79 & 10--250 & $4$  \\ 
\hline 
\shortstack{ED-00024: \\ Mechanical Turk Dots \cite{Mao2013Better}} & 
\shortstack{Ranking of dot-sets\\by Amazon-Turk workers.} & 
4 & 4 & $\approx 800$  \\ 
\hline 
\shortstack{ED-00025: \\ Mechanical Turk Puzzle  \cite{Mao2013Better}} & 
\shortstack{Ranking of puzzles\\by Amazon-Turk workers.} & 
4 & 4 & $\approx 800$  \\ 
\hline 
\shortstack{ED-00032: \\ Education Surveys} & 
\shortstack{Ranking of issues\\by informatics students.} & 
1 & 6 & $15$  \\ 
\hline 
\end{tabular} 
\caption{\label{tab:preflib} Summary of PrefLib \citep{MaWa13a} data-sets used in our experiments.}
\end{table}

In the second experiment we used preference-profiles that were generated according to \emph{Mallows' phi model} \citep{mallows1957}, which was claimed to favor level-1 consensus \citep{mahajne2015}. Mallows' model assumes that there is a ``correct'' preference $\succ_*$, and the actual preferences of the voters are noisy variants of it. The probability of a preference $\succ$ depends on its inversion distance from the correct preference: $d(\succ,\succ_*)$.  The strength of this dependence is determined by a parameter $\phi\in(0,1]$, where lower $\phi$ means higher dependence; when $\phi\to 0$ the preferences of all voters are identical and equal to $\succ_*$, while when $\phi=1$ the preference of each voter is selected uniformly at random from the $K!$ possible orderings on $K$ items. In general, the probability of each preference-relation $\succ$ is given by \citep{Tyler2014Mallows}:
\begin{align*}
\operatorname{Prob}\big[\succ~~ | ~~\phi,\succ_*\big] = {1\over Z}\cdot \phi^{d(\succ, \succ_*)}
\end{align*}
where $Z$ is a normalization factor.

We considered all 6 combinations of $K\in\{3,4,5\}$ alternatives and $n\in\{100,1000\}$ voters, where $\phi$ varied between 0 and 1. For each combination of $K,n,\phi$ we ran 1000 experiments and calculated (a) the fraction of profiles that exhibit level-1  consensus, (b) the fraction of profiles that are single-peaked,%
\footnote{
This calculation was done for the sake of comparison. It was implemented using Nicholas Mattei's PrefLib tools, which are freely available at GitHub: https://github.com/nmattei/PrefLib-Tools .
} 
and (c) the fraction of profiles that exhibit \emph{Flexible Consensus}, which will be presented in the next section. The results are shown in Figure \ref{fig:mallows}.
\begin{figure}
\hspace*{-2cm}
\includegraphics[width=1.2\textwidth]{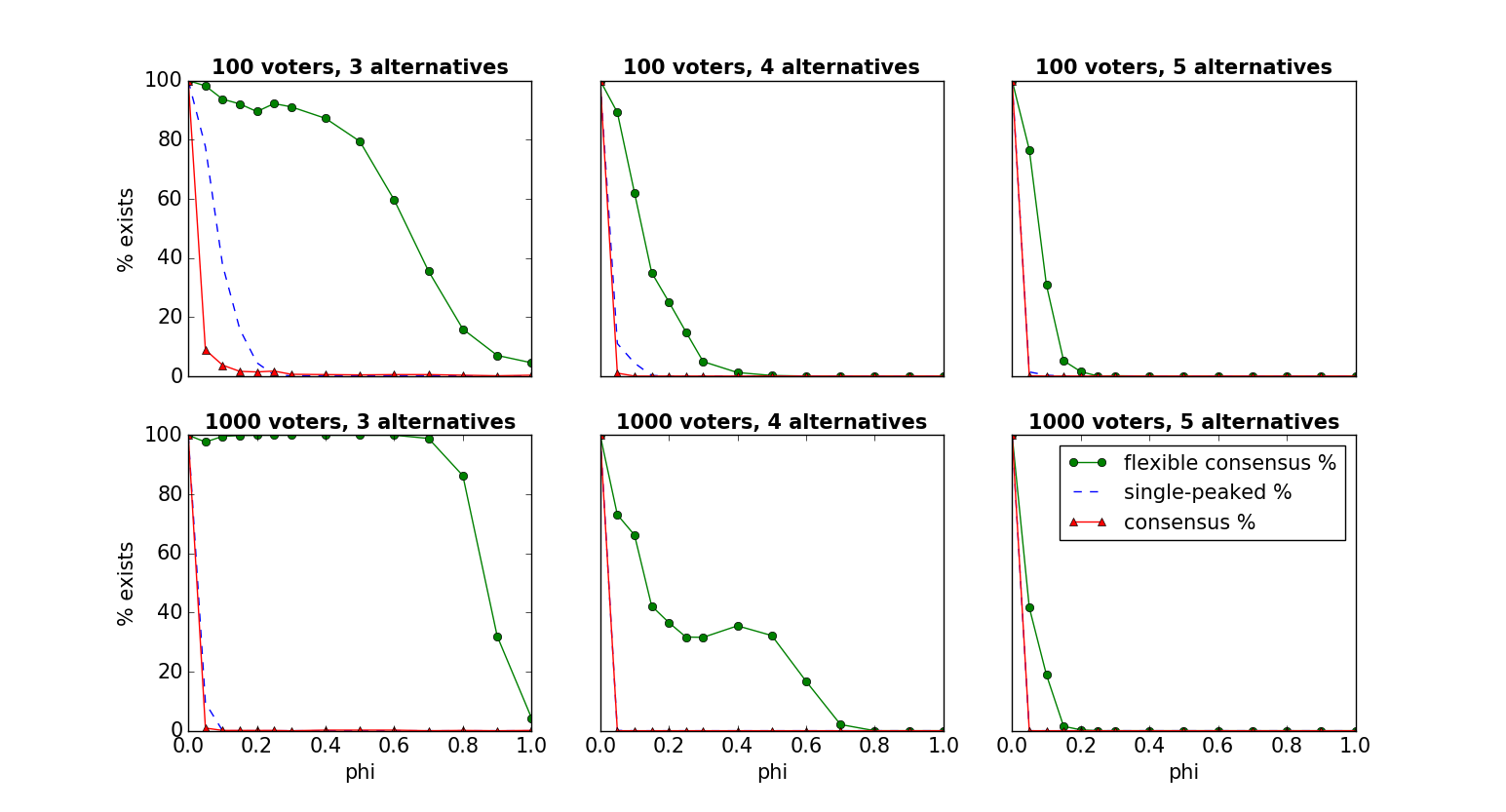}
\caption{
\label{fig:mallows}
Percentage of profiles, from a set of profiles selected at random according to Mallows' phi model, which exhibit level-1-consensus, single-peakedness, or Flexible Consensus (the latter is defined in Section \ref{sec:flexible}).
}
\end{figure}
As can be expected, consensus always exists when $\phi=0$, since in this case there is a deterministic consensus on the ``true'' preference. Additionally, when there are 100 voters and 3 alternatives and $\phi$ is small, a small positive percentage of the profiles exhibit a level-1 consensus (top left plot). In all other cases, the percentage of level-1 consensus profiles drops to 0 when $\phi\geq 0.05$.

Why is level-1 consensus so rare? 

Intuitively, the reason is that it requires groups of preferences to have exactly the same frequency in the population. Condition~1 implies that if $\dist{i}=\dist{j}$ then $\freq{i}=\freq{j}$. For every $K\geq 3$ and for every candidate $\succ_0$, there exist at least two preferences with the same distance from the candidate, $\dist{i}=\dist{j}$. Hence, a necessary condition for level-1 consensus is that there exist at least two preferences with exactly the same frequency. 
As the number of voters goes to $\infty$, 
the probability that any specific preference-relation appears with any specific frequency goes to 0. Therefore, the probability that two preference-relations have the same frequency goes to 0 too.

To formalize this intuition, we present an asymptotic upper bound on the probability of level-1 consensus for the case $\phi=1$. 
This is the case of impartial culture, in which all $K!$ preferences are equally probable. 

We select a preference-profile $\pi$ according to the following random process, parametrized by $K$ (the number of alternatives) and an integer-valued parameter $m$.
\begin{itemize}
\item Let $(\succ_i)_{i=1}^{K!}$ be an enumeration of the preferences in $\prefs$. 
\item 
For each $i$, draw a number $B_i$ according to a binomial distribution with $m$ trials and success-probability $1/K!$.
\item 
Return a profile in which, for every  $i$, there are $B_i$ voters whose preference is $\succ_i$.
\end{itemize}
All the $B_i$'s are i.i.d. random variables with mean value $\mu := {m\over K!}$. 
The total number of preferences in the profile $\pi$ is $n = \sum_{i=1}^{K!}B_i$; this is also a random variable, and its mean value is $E[n] = \sum_{i=1}^{K!}{m\over K!} = m$.
The process is symmetric with respect to the preferences in $\prefs$, so all preferences are equally likely, in accordance with the impartial-culture assumption.

Define $\pcon(m,K)$ as the probability that the above random process yields a profile that exhibits Level-1 Consensus. The rest of the section is devoted to proving the following proposition:
\begin{proposition}
\label{prop:pconsensus}
For every $K$:
\begin{align*}
\lim_{m\to \infty} \pcon(m,K) = 0
\end{align*}
\end{proposition}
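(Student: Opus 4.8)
The plan is to exploit the necessary condition for level-1 consensus already identified in the text: if $\pi$ exhibits level-1 consensus around some $\succ_0$, then since for every $K\ge 3$ there exist two distinct preferences $\succ_i,\succ_j$ with $d(\succ_i,\succ_0)=d(\succ_j,\succ_0)$, Condition~1 forces $\mu_\pi(\succ_i)=\mu_\pi(\succ_j)$. Concretely, $\succ_0$ together with one adjacent transposition $a\leftrightarrow b$ applied to its first two (or any two consecutive) elements gives one neighbour at distance $1$; but to get a clean pair at equal distance, note that applying two \emph{disjoint} adjacent transpositions, or using the two preferences obtained by swapping positions $1\leftrightarrow2$ versus positions $K-1\leftrightarrow K$, yields two distinct preferences both at distance exactly $1$ from $\succ_0$. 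Thus $\pcon(m,K)$ is bounded above by the probability that \emph{there exist} two distinct preferences in $\prefs$ with exactly equal frequency, i.e. two indices $i\ne j$ with $B_i=B_j$.

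So the core task reduces to showing
\begin{align*}
\lim_{m\to\infty}\operatorname{Prob}\big[\exists\, i\ne j:\ B_i=B_j\big]=0,
\end{align*}
where $B_1,\dots,B_{K!}$ are i.i.d. $\mathrm{Binomial}(m,1/K!)$. Actually this is \emph{false} as stated for fixed number of bins --- with $K!$ fixed and $m\to\infty$ the binomial concentrates on a window of width $\Theta(\sqrt m)$ around its mean $\mu=m/K!$, and by a birthday-type argument two of the $K!$ samples will collide with probability bounded \emph{away from zero}. Hence one cannot simply bound by ``some collision''; instead I would bound by the probability that the collision pattern is compatible with level-1 consensus. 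The right reduction is: level-1 consensus around $\succ_0$ forces, for \emph{every} value $d$ in the range $[0,\widehat d]$, that \emph{all} preferences at distance $d$ from $\succ_0$ have the same frequency and that these common frequencies are strictly decreasing in $d$. Since for $K\ge 3$ there are at least two preferences at distance $1$ and at least two at distance $2$ (e.g. two disjoint adjacent swaps, giving distance $2$), level-1 consensus requires simultaneously $B_i=B_j$ for the distance-$1$ pair \emph{and} $B_k=B_\ell$ for a distance-$2$ pair, with moreover $B_i>B_k$ if both are in the profile. For the asymptotic bound it suffices to require the single event ``$B_i=B_j$ for one fixed pair AND the common value equals $M:=\max_t B_t$''; but more simply, the cleanest route is:

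\textbf{Step 1.} Fix a specific pair $\{\succ_i,\succ_j\}$ that is at equal distance from \emph{every} candidate $\succ_0$ in the support of the maximum-frequency preferences --- e.g. take any pair that are reflections of each other --- or, avoiding that subtlety, observe that whichever $\succ_0$ is returned, it lies among the maximizers of $B_t$, and among the maximizers there must be at least two preferences $\succ_i\ne\succ_j$ with $B_i=B_j=M$ (since any valid $\succ_0$ has a mate at the same distance $0$? no --- at distance $1$). Cleaner: level-1 consensus forces the number of preferences attaining the maximum frequency $M$ to be exactly $1$ (only $\succ_0$ is at distance $0$), yet the distance-$1$ shell has $\ge 2$ preferences which must \emph{also} share a common frequency, call it $M'<M$. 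So a \emph{necessary} condition is: exactly one index attains $\max_t B_t$.

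\textbf{Step 2.} Therefore $\pcon(m,K)\le \operatorname{Prob}[\text{the maximum of }B_1,\dots,B_{K!}\text{ is attained by exactly one index}]$ is \emph{not} enough (this can be near $1$). Instead use: necessary condition is that among the $B_t$ values, once sorted descending as $\beta_1\ge\beta_2\ge\cdots$, we have $\beta_1>\beta_2$ (unique max, from distance $0$ vs distance $1$) AND $\beta_2=\beta_3$ (two preferences in the distance-$1$ shell, when $K!\ge 3$ and the shell has $\ge 2$ elements, which holds for $K\ge3$) --- provided both distance-$1$ preferences are the $2$nd and $3$rd largest, which again needn't hold. The genuinely robust statement, and the one I would actually prove, is: \emph{for any fixed pair} $i\ne j$, $\operatorname{Prob}[B_i=B_j]\to 0$? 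No --- that limit is a positive constant $\sum_k p_k^2$ in the local-CLT sense it is $\Theta(1/\sqrt m)\to 0$. Indeed $\operatorname{Prob}[B_i=B_j]=\sum_{k}\operatorname{Prob}[B_i=k]\operatorname{Prob}[B_j=k]\le \max_k\operatorname{Prob}[B_i=k]=O(1/\sqrt m)\to 0$ by the standard bound on the maximal atom of a $\mathrm{Binomial}(m,1/K!)$ (e.g. via Stirling, the central term is $\Theta(1/\sqrt{m})$).

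So the proof I would write is: (1) recall from the text's ``Why is level-1 consensus so rare?'' paragraph that level-1 consensus around any $\succ_0$ implies there exist two distinct preferences $\succ_i,\succ_j$ --- namely any two at equal distance from $\succ_0$, which exist for $K\ge 3$ --- with $\mu_\pi(\succ_i)=\mu_\pi(\succ_j)$, i.e. $B_i=B_j$; (2) hence $\pcon(m,K)\le \operatorname{Prob}[\exists\, i\ne j,\ B_i=B_j]\le \binom{K!}{2}\max_{i\ne j}\operatorname{Prob}[B_i=B_j]$ by a union bound over the finitely many pairs; (3) for each pair, $\operatorname{Prob}[B_i=B_j]=\sum_{k=0}^{m}\operatorname{Prob}[B_i=k]\,\operatorname{Prob}[B_j=k]\le \sup_k \operatorname{Prob}[\mathrm{Binomial}(m,1/K!)=k]$; (4) bound this supremum: the maximal point-mass of $\mathrm{Binomial}(m,p)$ with $p=1/K!$ fixed is $O\big(1/\sqrt{m\,p(1-p)}\big)=O_K(1/\sqrt m)$, which one gets from Stirling's formula applied to the central binomial coefficient, or cites as a standard fact; (5) combine: $\pcon(m,K)\le \binom{K!}{2}\cdot O_K(1/\sqrt m)\to 0$ as $m\to\infty$, with $K$ fixed. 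The main obstacle is purely expository: making precise that two equidistant preferences always exist for $K\ge3$ (which the paper already asserts, so it can be cited) and getting the $O(1/\sqrt m)$ atom bound cleanly --- the latter is a one-line Stirling estimate or a citation, and the union bound over $\binom{K!}{2}$ fixed pairs is harmless since $K$ is fixed. I would present steps (3)--(4) as a short self-contained lemma on the maximal atom of a fixed-$p$ binomial, then assemble the union bound.
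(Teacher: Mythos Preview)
Your final argument (steps (1)--(5)) is correct, though the exposition wanders considerably before settling on it. One small correction to your exploratory remarks: your early claim that ``this is false as stated for fixed number of bins'' is itself wrong --- with $K!$ fixed and $m\to\infty$, the effective support of each $B_i$ has width $\Theta(\sqrt m)$, so a fixed number of i.i.d.\ samples collide with probability $O(1/\sqrt m)\to 0$, exactly as you eventually compute. There is no birthday obstruction here.

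Your route differs substantially from the paper's. You use only the single necessary condition ``some pair $B_i=B_j$'', union-bound over the $\binom{K!}{2}$ pairs, and bound $\Pr[B_i=B_j]\le\sup_k\Pr[\mathrm{Binomial}(m,1/K!)=k]=O_K(1/\sqrt m)$ via the maximal-atom estimate. This yields $\pcon(m,K)=O_K(1/\sqrt m)$ uniformly in $K$. The paper instead fixes a candidate $\succ_0$, observes that level-1 consensus forces \emph{every} distance shell $T(K,d)$ to have all its $|T(K,d)|$ preferences at exactly equal frequency, multiplies the resulting equality probabilities $\peq(m,1/K!,|T(K,d)|)\approx(2\pi m/K!)^{-(|T(K,d)|-1)/2}$ over all $d$, and then union-bounds over the $K!$ candidates. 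This gives the much sharper rate $\pcon(m,K)\lesssim K!\big/(2\pi m/K!)^{(K!-\binom{K}{2}-1)/2}$, so e.g.\ $O(1/m)$ for $K=3$ and $O(1/m^{8.5})$ for $K=4$. Your argument is shorter and fully elementary; the paper's buys an explicit rate that improves dramatically with $K$ and supports the paper's qualitative claim that convergence is faster for larger $K$.
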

\begin{proof}
We first show an upper bound on the probability for level-1 consensus around a fixed preference.

Let $\succ_0$ be a fixed preference. 
Denote by $\pconfixed(m,K)$ the probability that a profile $\pi$, selected according to the random process above with parameters $m,K$, exhibits Level-1 Consensus around $\succ_0$.
Since all preferences are equally probable, $\pconfixed(m,K)$ is the same for all $\succ_0$. We now present an approximate upper bound on $\pconfixed(m,K)$.

Recall that $T(K,d)$ is the set of distinct preferences whose inversion-distance from $\succ_0$ is exactly $d$,
where $d$ can vary between 0 and ${K\choose 2}$ (the number of pairs of $K$ alternatives). 
The conditions for Level-1 Consensus imply that all preferences in $T(K,d)$ must have the same frequency in $\pi$, i.e:
\begin{align}
\label{eq:pconfixed-1}
\text{For all~} i,j
\text{~~such that~~}  \succ_i \in T(K,d)
\text{~~and~~} \succ_j \in T(K,d):
~~ 
B_i=B_j
\end{align}
Let $\peq(m,p,t)$ be the probability that $t$~ i.i.d. random variables distributed like $Binom[m,p]$ are equal. Then, (\ref{eq:pconfixed-1}) implies that:
\begin{align}
\label{eq:pconfixed-2}
\pconfixed(m,K)
\leq
\prod_{d=0}^{K\choose 2} \peq\left(m,{1\over K!},|T(K,d)|\right)
\end{align}
In Appendix \ref{sec:binomial}, we show that $\peq(m,p,t)$ can be approximated as:
\begin{align*}
\peq(m,p,t)
&\approx
{1
\over 
(\sqrt{2\pi p q m})^{t-1}
}
\end{align*}
where $q=1-p$ and the symbol $\approx$ means that the ratio of the expressions in its two sides goes to $1$ as $m\to\infty$. Therefore, (\ref{eq:pconfixed-2}) can be approximated as:
\begin{align}
\pconfixed(m,K)
\leq&\approx 
{1 / 
\sqrt{
\prod_{d=0}^{K\choose 2}
\left(2\pi {1\over K!} m\right)^{[|T(k,d)|-1]} 
}
}
\notag
\\
\label{eq:pconfixed-3}
&=
{1 / 
\sqrt{
\left({2\pi m\over K!} \right)^{\sum_{d=0}^{K\choose 2} [|T(k,d)|-1] } 
}
}
\end{align}
The sum in the exponent can be simplified as follows:
\begin{align*}
\sum_{d=0}^{K\choose 2} [|T(k,d)|-1]
=
\left[\sum_{d=0}^{K\choose 2} |T(k,d)|\right] - {K\choose 2} - 1
\end{align*}
The sum of $|T(K,d)|$ over all possible values of $d$ (i.e, $d$ between 0 and ${K\choose 2}$) equals the total number of different preferences over $K$ items, which is $K!$:
\begin{align}
\sum_{d=0}^{K\choose 2} |T(k,d)| = K!
\end{align}
Substituting in (\ref{eq:pconfixed-3}) gives:
\begin{align*}
\pconfixed(m,K)
\leq&\approx 
{1 / 
\sqrt{
\left(2\pi m\over K!\right)^{K!-{K\choose 2}-1} 
}
}
\end{align*}

Recall that $\pconfixed(m,K)$ is the probability of level-1 consensus around a \emph{fixed} preference. The probability of level-1 consensus around \emph{any} preference is, by the union bound, at most this probability times the number of possible preferences, i.e, 
\begin{align*}
\pcon(m,k)
&\leq
K! \cdot \pconfixed(m,K) 
\\
&\leq 
\frac{K!}{
\sqrt{
\left(2\pi m\over K!\right)
^{K!-{K\choose 2}-1}}}
\end{align*}
so for every fixed $K\geq 3$, $\lim_{m\to\infty}\pcon(m,k) = 0$.
\end{proof}
As an illustration, with $K=3$ alternatives we get an upper bound of $6/\sqrt{(2\pi m/6)^2}\in O(1/m)$; with $K=4$ alternatives the upper bound is $24/\sqrt{(2\pi m/24)^{17}}\in O(1/m^{8.5})$. The rate of convergence to zero is faster when $K$ is larger.

\section{Flexible Consensus}
\label{sec:flexible}
Motivated by the low probability of a level-1 consensus, we suggest below a weakened variant of this property termed \emph{Flexible Consensus}.
It is equivalent to a weakened version of Condition~1, without Condition 2.

\begin{definition}
\label{def:flexible-consensus}
Let $\succ_0 \in \prefs$.
A profile $\pi \in \prefs^n$ exhibits \emph{Flexible Consensus around $\succ_0$} if the following condition holds:
\begin{quote}
(Flexible Condition~1)

For all $\succ',\succ$,~~~if~~$\mu_\pi(\succ') >  \mu_\pi(\succ)$,~~~
then~~$d(\succ', \succ_0)  \leq  d(\succ, \succ_0)$.
\end{quote}
\end{definition}

The only difference between Condition 1 and Flexible Condition 1 is that $d(\succ', \succ_0) \ < \ d(\succ, \succ_0)$ is replaced by $d(\succ', \succ_0) \ \leq \ d(\succ, \succ_0)$. 
It will be shown below that this apparently minor change 
significantly increases the probability that the condition is satisfied, while keeping the desirable stability properties of the original condition.
Moreover, these stability properties hold even without Condition~2.\footnote{
Our proofs below closely follow the proofs of \cite{mahajne2015}. Their proofs are stated for level-$r$ consensus for general $r$, and indeed Flexible Condition~1 can also be adapted to general $r$, but for the sake of simplicity we prefer to focus on the case $r=1$.
}

\subsection{First stability property: Existence of a Condorcet winner}
\begin{definition}
Given a profile $\pi$, the \emph{majority relation} $M_\pi$ is defined as follows: $a M_\pi b$ iff, in a vote between alternatives $a$ and $b$, $a$ beats $b$ by a weak majority. I.e, the number of preferences in $\pi$ by which $a\succ b$ is at least as large as the number of preferences in $\pi$ by which $b\succ a$.
\end{definition}

\begin{lemma}
\label{lem:cond1'}
Let $\pi\in\prefs^n$ be a profile of $n$ voters and $\succ_0\in \prefs$ a preference-relation such that Flexible Condition~1 is satisfied. 
~~~
Then, for any two alternatives $a,b$:

(1) If $a\succ_0 b$ ($a$ is preferred to $b$ by $\succ_0$)
then $a M_\pi b$.

(2) If $n$ is odd then the opposite is also true: if $a M_\pi b$ then $a\succ_0 b$.

(3) If $n$ is even then the opposite is ``almost'' true: if $a M_\pi b$ then either $a\succ_0 b$ or $a\succ_1 b$, where $\succ_1$ is another preference around which there is Flexible Consensus.\footnote{We are grateful to an anonymous referee for suggesting this additional condition.}
\end{lemma}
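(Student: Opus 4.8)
The plan is to prove the three parts by a counting argument that pairs up preferences related by the transposition of $a$ and $b$. Fix two alternatives $a,b$ with $a\succ_0 b$. For any preference $\succ$, let $\switch(\succ)$ denote the preference obtained from $\succ$ by swapping the relative order of $a$ and $b$ (leaving everything else fixed); this is an involution on $\prefs$ that partitions $\prefs$ into ${K!/2}$ pairs $\{\succ,\switch(\succ)\}$. The key geometric observation is that $d(\switch(\succ),\succ_0) = d(\succ,\succ_0) \pm 1$, and more precisely: if $a\succ b$ then $\switch(\succ)$ ranks $a,b$ oppositely to $\succ_0$, so it is ``farther'' and $d(\switch(\succ),\succ_0) = d(\succ,\succ_0)+1$; if $b\succ a$ then $\switch(\succ)$ agrees with $\succ_0$ on $a$ versus $b$ and $d(\switch(\succ),\succ_0) = d(\succ,\succ_0)-1$. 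In words: within each pair, the preference that ranks $a$ above $b$ is strictly closer to $\succ_0$ than its partner.

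For part (1): the number of voters with $a\succ b$ minus the number with $b\succ a$ equals $\sum_{\text{pairs }\{\succ,\switch(\succ)\}}\bigl(\mu_\pi(\succ_{ab}) - \mu_\pi(\switch(\succ_{ab}))\bigr)$, where $\succ_{ab}$ denotes the member of the pair with $a$ above $b$. By the observation, $\succ_{ab}$ is strictly closer to $\succ_0$ than its partner; hence Flexible Condition~1 (in contrapositive form: $d(\succ',\succ_0) > d(\succ,\succ_0) \Rightarrow \mu_\pi(\succ') \leq \mu_\pi(\succ)$) gives $\mu_\pi(\succ_{ab}) \geq \mu_\pi(\switch(\succ_{ab}))$ for every pair. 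Summing, the difference is $\geq 0$, which is exactly $a M_\pi b$.

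For part (2), suppose $n$ is odd and $a M_\pi b$. If we had $b\succ_0 a$ then part (1) applied with the roles of $a,b$ reversed would give $b M_\pi a$, i.e. the number of voters with $b\succ a$ is $\geq$ the number with $a\succ b$; combined with $a M_\pi b$ this forces the two counts to be equal, contradicting that $n$ is odd (the two counts sum to $n$, which is odd, so they cannot be equal). Hence $a\succ_0 b$. For part (3), with $n$ even the same argument shows that if $a M_\pi b$ but not $a\succ_0 b$, then the two counts are exactly equal, $n/2$ each; tracing back through the pairwise sum in part (1), equality of the totals forces $\mu_\pi(\succ_{ab}) = \mu_\pi(\switch(\succ_{ab}))$ for \emph{every} pair $\{\succ,\switch(\succ)\}$. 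I would then argue that this equality, together with Flexible Condition~1, implies that $\switch(\succ_0)$ — call it $\succ_1$ — also satisfies Flexible Condition~1: swapping $a,b$ reverses which member of each pair is closer, but since the frequencies are equal across every such pair and the distances to $\succ_1$ differ from those to $\succ_0$ only in a way that is compensated by this equality, the required implication still holds; and $a\succ_1 b$ fails, i.e. $b\succ_1 a$... wait — rather, since $b\succ_0 a$ we get $a\succ_1 b$, so the disjunction ``$a\succ_0 b$ or $a\succ_1 b$'' holds.

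The main obstacle is part (3): I need to verify carefully that $\succ_1 = \switch(\succ_0)$ genuinely satisfies Flexible Condition~1 globally, not just on the $a$–$b$ pairs, which requires checking how $\switch$ affects inversion distances for all pairs of alternatives simultaneously and confirming that the forced equalities $\mu_\pi(\succ) = \mu_\pi(\switch(\succ))$ are enough to push any potential violation to a boundary case that the $\leq$ in Flexible Condition~1 tolerates. The first two parts are routine once the pairing observation is in place; it is this last compatibility check — and the bookkeeping that $d(\cdot,\succ_1)$ versus $d(\cdot,\succ_0)$ only ever differ by the $a$–$b$ inversion bit — that demands care.
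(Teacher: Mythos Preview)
Your overall strategy matches the paper's: pair preferences via the $a\!\leftrightarrow\!b$ swap $\switch$, use that the $a$-above-$b$ member of each pair is strictly closer to $\succ_0$ to get part (1), deduce (2) from parity, and for (3) show that a tie forces $\mu_\pi(\succ)=\mu_\pi(\switch(\succ))$ for all $\succ$, then transfer Flexible Consensus to $\succ_1=\switch(\succ_0)$.

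However, your ``key geometric observation'' that $d(\switch(\succ),\succ_0)=d(\succ,\succ_0)\pm 1$ is \emph{false}. The swap $\switch$ exchanges the \emph{positions} of $a$ and $b$ in the ranking, so it also flips every pair $\{a,c\}$ and $\{b,c\}$ with $c$ ranked between $a$ and $b$. For instance, with $\succ_0=(a\,c\,b)$ and $\succ=(b\,c\,a)$ one has $d(\succ,\succ_0)=3$ while $\switch(\succ)=\succ_0$, so the distance drops by $3$, not $1$. The paper handles this by proving only the strict inequality $d(\switch(\succ),\succ_0)<d(\succ,\succ_0)$ for $\succ\in C(b>a)$ (its Appendix lemma), which is all that parts (1)--(2) require; your argument for (1) and (2) survives once you replace ``$\pm 1$'' by ``strictly smaller''.

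The error does bite in your part (3) sketch. You say the distances to $\succ_1$ and $\succ_0$ ``only ever differ by the $a$--$b$ inversion bit'', which is the same incorrect $\pm 1$ picture. What actually makes the transfer work is that $\switch$ is an \emph{isometry} of the inversion metric: $d(\switch(\succ),\switch(\succ'))=d(\succ,\succ')$ for all $\succ,\succ'$. Granting this, $d(\succ',\succ_1)<d(\succ,\succ_1)$ is equivalent to $d(\switch(\succ'),\succ_0)<d(\switch(\succ),\succ_0)$; Flexible Condition~1 at $\succ_0$ then gives $\mu_\pi(\switch(\succ'))\geq\mu_\pi(\switch(\succ))$, and the forced equalities $\mu_\pi(\cdot)=\mu_\pi(\switch(\cdot))$ convert this to $\mu_\pi(\succ')\geq\mu_\pi(\succ)$. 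That is exactly the paper's Step~II/III. So your plan for (3) is salvageable, but you need the isometry property of $\switch$, not the (false) $\pm 1$ claim, to carry it through.
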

\begin{proof}
(1) Suppose that $a\succ_0 b$.
Partition $\prefs$, the set of $K!$ possible preferences, to two subsets: 
\begin{itemize}
\item The subset $C(a>b)$ containing the $K!/2$ preferences for which $a\succ  b$;
\item The subset $C(b>a)$ containing the $K!/2$ preferences for which $b\succ  a$.
\end{itemize}
Let $\switch: C(b>a) \to C(a>b)$ be the bijection that takes a preference in $C(b>a)$ and switches $a$ with $b$ in the ranking. Since $a\succ_0 b$, 
this switch brings every preference in $C(b>a)$ at least one step closer to $\succ_0$. I.e, 
for every preference $\succ\in C(b>a)$ it holds that $d(\switch(\succ),\succ_0) < d(\succ,\succ_0)$ (see proof in Appendix \ref{sec:switch}).

By Flexible Condition~1, this implies that $\mu_\pi(\switch(\succ)) \geq \mu_\pi(\succ)$. So for every preference $\succ$ by which $b$ is preferred to $a$ corresponds a unique preference $\switch(\succ)$ by which $a$ is preferred to $b$, which is at least as frequent. Therefore, $a$ beats $b$ by a weak majority: $a M_\pi b$.

(2) When $n$ is odd and  $a M_\pi b$, the majority must be strict, so it is not true that $b M_\pi a$. Hence, by (1), it is not true that $b \succ_0 a$. 
By our assumption, $\succ_0$ is a strict total order. Therefore, $a\succ_0 b$.

(3) When $n$ is even and $a M_\pi b$, 
there are two cases: if $a\succ_0 b$ then we are done.
If $b\succ_0 a$, then 
$a \succ_1 b$, where $\succ_1 = \switch(\succ_0)$.
It remains to prove that there is Flexible Consensus around $\switch(\succ_0)$.

Step I. Since $b\succ_0 a$, the 
argument in (1) shows that, 
for every preference $\succ\in C(a>b)$, 
the frequency 
$\mu_\pi(\switch(\succ)) \geq \mu_\pi(\succ)$.
But since $a M_\pi b$, all these inequalities must be equalities, i.e, 
for every preference $\succ\in C(a>b)$, 
we must have
$\mu_\pi(\switch(\succ)) = \mu_\pi(\succ)$.

Step II. 
For every two preferences $\succ,\succ'$:
$d(\succ,\succ') = d(\switch(\succ),\switch(\succ'))$.
To see this, consider the pairs of alternatives that are inversed between $\succ,\succ'$. If, in each such pair, we replace $a$ by $b$ and $b$ by $a$, then we get exactly the pairs of alternatives that are inversed between 
$\switch(\succ)$ and $\switch(\succ')$.

Step III. Let $\succ',\succ$ be two preferences for which $d(\succ',\switch(\succ_0)) < d(\succ,\switch(\succ_0))$.
Then, by Step II, 
$d(\switch(\succ'),\succ_0) < d(\switch(\succ),\succ_0)$.
Since there is Flexible Consensus around $\succ_0$, this implies:
$\mu_\pi(\switch(\succ')) \geq  \mu_\pi(\switch(\succ))$.
By Step I, this implies:
$\mu_\pi(\succ') \geq  \mu_\pi(\succ)$.
Hence, there is Flexible Consensus around $\switch(\succ_0)$. 

\end{proof}

\begin{definition}
For every preference $\succ_0\in \prefs$, 
the alternative ranked first according to $\succ_0$ is denoted by $Best(\succ_0)$ 
\end{definition}

\begin{definition}
Given a profile $\pi\in \prefs^n$,
a \emph{weak Condorcet winner} of $\pi$ is an alternative $a$ that beats all other alternatives by a weak majority, i.e, for any other alternative $b$, $a M_\pi b$.
\end{definition}

\begin{theorem}
\label{thm:condorcet}
Let $\pi\in\prefs^n$ be a profile and $\succ_0\in \prefs$ a preference-relation around which there is Flexible Consensus. Then $Best(\succ_0)$ is a weak Condorcet winner of $\pi$.

Moreover, if $n$ is odd then $\succ_0$ coincides with the majority relation $M_\pi$, and $\succ_0$ is the unique preference in $\prefs$ for which Flexible Condition~1 is satisfied.
\end{theorem}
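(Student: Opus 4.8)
The plan is to read everything off Lemma \ref{lem:cond1'}, whose three parts do all the real work.

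First, the weak Condorcet claim. Set $a := Best(\succ_0)$, the top-ranked alternative of $\succ_0$. For every other alternative $b$ we have $a \succ_0 b$ by definition of $Best$, so part~(1) of Lemma \ref{lem:cond1'} gives $a M_\pi b$ immediately. Since this holds for all $b \neq a$, the alternative $a$ beats every other alternative by a weak majority, which is exactly the definition of a weak Condorcet winner. No work is needed here beyond invoking the lemma.

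Next, assume $n$ is odd and show that $M_\pi$ coincides with $\succ_0$. I would argue that the two relations agree on every ordered pair $(a,b)$: part~(1) of Lemma \ref{lem:cond1'} gives $a \succ_0 b \Rightarrow a M_\pi b$, and part~(2) (which is where the odd-$n$ hypothesis enters) gives the converse $a M_\pi b \Rightarrow a \succ_0 b$. Combining, $a M_\pi b \iff a \succ_0 b$ for all pairs, so $M_\pi$ equals $\succ_0$ as a relation; in particular $M_\pi$ inherits from $\succ_0$ the property of being a transitive strict total order. For uniqueness, suppose some $\succ_0' \in \prefs$ also satisfies Flexible Condition~1. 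Applying the argument just given with $\succ_0'$ in place of $\succ_0$ shows that $M_\pi$ coincides with $\succ_0'$ as well; since $M_\pi$ also coincides with $\succ_0$, we conclude $\succ_0 = \succ_0'$, so $\succ_0$ is the unique preference satisfying Flexible Condition~1.

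The argument is short, and I do not expect a substantive obstacle. The one point that needs care is bookkeeping on the odd/even distinction: part~(2) of Lemma \ref{lem:cond1'} (not the weaker part~(3)) is exactly what makes the ``moreover'' clause hold, and one should note that for odd $n$ the relation $M_\pi$ is automatically complete and antisymmetric on each pair, which is what lets the two one-directional implications pin $M_\pi$ down exactly rather than merely constrain it.
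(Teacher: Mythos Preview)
Your proposal is correct and follows essentially the same approach as the paper's proof: both invoke part~(1) of Lemma~\ref{lem:cond1'} to get the weak Condorcet winner, and then combine parts~(1) and~(2) for odd $n$ to identify $M_\pi$ with $\succ_0$ and deduce uniqueness by applying the same argument to any other $\succ_0'$. Your write-up is slightly more explicit about the bidirectional implication and the role of antisymmetry of $M_\pi$ for odd $n$, but the argument is the same.
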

\begin{proof}
Let $a := Best(\succ_0)$. So for every $b\neq a$, $a\succ_0 b$. By Lemma \ref{lem:cond1'}, this implies that $a M_\pi b$. Hence, $a$ is a weak Condorcet winner of $\pi$.

When $n$ is odd, Lemma \ref{lem:cond1'} implies that $a\succ_0 b$ iff $a M_\pi b$, so $\succ_0$ coincides with the ordering induced by $M_\pi$. This is true for any preference in $\prefs$ for which Flexible Condition~1 holds, so any such preference coincides with $\succ_0$.
\end{proof}

\subsection{Second stability property: agreement of scoring rules}

A \emph{scoring rule} is a rule for selecting an alternative based on a profile. 
\begin{definition}
A \emph{scoring rule} is a rule characterized by a vector $S$ of length $K$, $S_1\geq \cdots \geq S_K$. Given a profile $\pi$, for each preference $\succ\in \pi$, the rule assigns score $S_1$ to the alternative ranked first by $\succ$, $S_2$ to the alternative ranked second by $\succ$, and so on. The rule then sums the scores assigned to each alternative by all preferences in $\pi$, and selects the alternative/s that received the highest total score.
\end{definition}
In general, every scoring-rule $S$ might select a different alternative. But below we show that, if a profile exhibits Flexible Consensus, then there is an alternative which is selected all scoring rules.

\begin{lemma}
\label{lem:cond1'score}
Let $\pi\in\prefs^n$ be a profile and $\succ_0\in \prefs$ a preference-relation such that Flexible Condition~1 is satisfied. 
~~~
Then, for any two alternatives $a,b$ and any scoring-rule $S$, if $a\succ_0 b$ then the score of $a$ is at least as large as the score of $b$.
\end{lemma}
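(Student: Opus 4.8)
The plan is to mirror the argument of Lemma~\ref{lem:cond1'}, using the same switching bijection $\switch: C(b>a) \to C(a>b)$ that exchanges $a$ and $b$ in a ranking. Fix a scoring rule $S$ and alternatives $a,b$ with $a \succ_0 b$. The total score of $a$ is $\sum_{\succ \in \prefs} \mu_\pi(\succ)\cdot s(\succ, a)$, where $s(\succ,a) = S_k$ if $a$ is ranked $k$-th by $\succ$; similarly for $b$. I would split each of these sums over the partition $\prefs = C(a>b) \sqcup C(b>a)$ and compare the contributions term-by-term along the bijection $\switch$.

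The key observation is that if $\succ \in C(b>a)$ and $\succ' = \switch(\succ) \in C(a>b)$, then $\succ$ and $\succ'$ rank every alternative other than $a,b$ in exactly the same position, and they rank $a$ and $b$ in swapped positions. Hence $s(\succ', a) = s(\succ, b)$ and $s(\succ', b) = s(\succ, a)$, and since $a$ precedes $b$ in $\succ'$ while $b$ precedes $a$ in $\succ$, monotonicity of the score vector gives $s(\succ', a) \geq s(\succ, a)$ (equivalently $s(\succ', a) = s(\succ, b) \geq s(\succ, a) = s(\succ', b)$). Meanwhile, as shown in the proof of Lemma~\ref{lem:cond1'} (via Appendix~\ref{sec:switch}), $d(\switch(\succ),\succ_0) < d(\succ,\succ_0)$, so Flexible Condition~1 yields $\mu_\pi(\switch(\succ)) \geq \mu_\pi(\succ)$.

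Putting these together: I would write the score of $a$ minus the score of $b$ as a single sum over $\succ \in C(b>a)$, pairing the four terms coming from $\succ$ and from $\succ' = \switch(\succ)$. Each such block equals
\[
\mu_\pi(\succ')\,s(\succ',a) + \mu_\pi(\succ)\,s(\succ,a) - \mu_\pi(\succ')\,s(\succ',b) - \mu_\pi(\succ)\,s(\succ,b),
\]
which after substituting $s(\succ',a)=s(\succ,b)$, $s(\succ',b)=s(\succ,a)$ becomes $\big(\mu_\pi(\succ') - \mu_\pi(\succ)\big)\big(s(\succ,b) - s(\succ,a)\big) \geq 0$, since both factors are nonnegative. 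Summing over all $\succ \in C(b>a)$ shows the total score of $a$ is at least that of $b$.

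The only mild subtlety — and the step I would be most careful about — is the bookkeeping that ensures every preference contributes to exactly one block: the bijection $\switch$ pairs $C(b>a)$ with $C(a>b)$ perfectly, so the four terms indexed by $\succ$ and $\switch(\succ)$ exhaust all contributions without double-counting. Everything else is the routine term-by-term comparison; there is no real obstacle, just the need to state the positional-score identity $s(\switch(\succ), a) = s(\succ, b)$ cleanly.
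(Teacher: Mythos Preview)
Your argument is correct and is essentially the same as the paper's: both partition $\prefs$ into $C(a>b)$ and $C(b>a)$, pair preferences via the swap bijection $\switch$, and show that each paired contribution to $\mathrm{score}(a)-\mathrm{score}(b)$ factors as a product of two nonnegative terms (the frequency gap from Flexible Condition~1 and the score gap from monotonicity of $S$). The only cosmetic difference is that the paper packages the score gap as $\Delta_{S,a,b}(\succ)$ and sums over $C(a>b)$, while you index over $C(b>a)$ and work with the positional identity $s(\switch(\succ),a)=s(\succ,b)$ directly; these are the same computation.
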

\begin{proof}
Similarly to the proof of Lemma \ref{lem:cond1'}, we partition $\prefs$ into two subsets, $C(a>b)$ and $C(b>a)$, and define the bijection $\switch$ between them.

For every scoring-rule $S$ and preference $\succ\in\prefs$, define $\Delta_{S,a,b}(\succ)$ as the difference between the score of $a$ in $\succ$ and the score of $b$ in $\succ$. By definition of a scoring rule:
\begin{itemize}
\item For every preference $\succ\in C(a>b)$, $\Delta_{S,a,b}(\succ)$ is weakly-positive.
\item For every preference $\succ\in C(b>a)$, $\Delta_{S,a,b}(\succ)$ is weakly-negative.
\item For every preference $\succ\in \prefs$, $\Delta_{S,a,b}(\succ) = - \Delta_{S,a,b}(\switch(\succ))$.
\end{itemize}
Given the scoring rule $S$ and the profile $\pi$, define $\Delta_{S,a,b}(\pi)$ as the difference between the total score of $a$ in $\pi$ and the total score of $b$ in $\pi$. Then, by definition:
\begin{align*}
\Delta_{S,a,b}(\pi) 
&= 
\sum_{\succ\in\prefs}
\mu_\pi(\succ)\cdot \Delta_{S,a,b}(\succ)
\\
&=
\sum_{\succ\in C(a>b)}
\mu_\pi(\succ)\cdot \Delta_{S,a,b}(\succ)
+
\sum_{\succ\in C(b>a)}
\mu_\pi(\succ)\cdot \Delta_{S,a,b}(\succ)
\\
&=
\sum_{\succ\in C(a>b)}
\left[
\mu_\pi(\succ)\cdot \Delta_{S,a,b}(\succ)
+
\mu_\pi(\switch(\succ))\cdot \Delta_{S,a,b}(\switch(\succ))
\right]
\\
&=
\sum_{\succ\in C(a>b)}
\left[
\mu_\pi(\succ)\cdot \Delta_{S,a,b}(\succ) - 
\mu_\pi(\switch(\succ))\cdot \Delta_{S,a,b}(\succ)
\right]
&&
\\
&=
\sum_{\succ\in C(a>b)}
\Delta_{S,a,b}(\succ) \cdot \left[
\mu_\pi(\succ) - 
\mu_\pi(\switch(\succ))
\right]
\end{align*}
Since $a\succ_0 b$, for every preference $\succ\in C(a>b)$, the lemma in Appendix \ref{sec:switch} implies that $d(\switch(\succ),\succ_0) > d(\succ,\succ_0)$. Hence, by Flexible Condition~1, $\mu_\pi(\succ) \geq 
\mu_\pi(\switch(\succ)$. Hence, all terms in the last sum are weakly-positive. Hence, $\Delta_S(\pi)\geq 0$ and the lemma is proved.
\end{proof}
\begin{theorem}
Let $\pi\in\prefs^n$ be a profile and $\succ_0\in \prefs$ a preference-relation such that Flexible Condition~1 is satisfied. Then 
the score assigned to $Best(\succ_0)$ by every scoring rule is at least as high as the score assigned to any other alternative by the same rule.
\end{theorem}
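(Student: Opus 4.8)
The plan is to derive this theorem as an immediate corollary of Lemma~\ref{lem:cond1'score}, which already does all the real work. Set $a := Best(\succ_0)$. By definition of $Best$, the alternative $a$ is ranked first by $\succ_0$, so for \emph{every} other alternative $b$ we have $a \succ_0 b$.

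Now fix an arbitrary scoring rule $S$ and an arbitrary alternative $b \neq a$. Since $a \succ_0 b$ and Flexible Condition~1 holds for $\pi$ and $\succ_0$, Lemma~\ref{lem:cond1'score} applies directly and tells us that the total score assigned to $a$ by $S$ in the profile $\pi$ is at least as large as the total score assigned to $b$ by $S$. Because $b$ was arbitrary, $a = Best(\succ_0)$ weakly dominates every other alternative under $S$; and because $S$ was arbitrary, this holds for every scoring rule. This is exactly the claim.

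There is essentially no obstacle here: the combinatorial pairing argument (partitioning $\prefs$ into $C(a>b)$ and $C(b>a)$, using the bijection $\switch$, and invoking the switch-distance lemma of Appendix~\ref{sec:switch} together with Flexible Condition~1 to sign the telescoped sum) was already carried out in the proof of Lemma~\ref{lem:cond1'score}. The only thing to verify is the trivial observation that ``ranked first by $\succ_0$'' means $a \succ_0 b$ for all $b \neq a$, which is immediate from $\succ_0$ being a strict total order. One could optionally remark, as a parallel to Theorem~\ref{thm:condorcet}, that if moreover Condition~2 were imposed so that strict inequalities appear, then $Best(\succ_0)$ would be the \emph{unique} winner of every scoring rule; but under Flexible Consensus alone the weak-dominance statement above is the most one can assert, so the proof ends there.
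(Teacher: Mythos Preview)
Your proposal is correct and matches the paper's own proof, which simply states ``Follows immediately from Lemma~\ref{lem:cond1'score}.'' You have unpacked that one line by explicitly naming $a=Best(\succ_0)$, noting $a\succ_0 b$ for all $b\neq a$, and invoking the lemma, which is exactly the intended argument.
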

\begin{proof}
Follows immediately from Lemma \ref{lem:cond1'score}.
\end{proof}

\section{Algorithm for Detecting Flexible Consensus}

\begin{figure}
\begin{algorithm}[H]
\begin{flushleft}
INPUT: 
\begin{itemize}
\item A set of $K$ alternatives, $A = \{a_1,\ldots,a_K\}$.
\item A profile $\pi$ containing $n$ preference-relations on $A$ (possibly with duplicates).
\end{itemize}

OUTPUT:
\begin{itemize}
\item If $\pi$ exhibits Flexible Consensus around some preference $\succ_0$, return  $\succ_0$. 
\item Otherwise, return ``no consensus''.
\end{itemize}

ALGORITHM:
\begin{enumerate}
\item Calculate the frequency $\mu_\pi(\succ)$ of each preference $\succ\in\pi$.
\item Define $n'=$ number of distinct preferences in $\pi$.
\item Order the preferences by descending frequency:
$\mu_\pi(\succ_1) ~\geq~ \mu_\pi(\succ_2) ~\geq~ \cdots~\geq~ \mu_\pi(\succ_{n'})$.
\item Set $M := \freq{1} $ to be the maximum frequency of a preference relation.
\item For $j = 1, 2, \ldots n'$ \  while $\mu_\pi(\succ_j) = M$: 
Check if Flexible Condition~1 is satisfied for $\pi$ and  $\succ_j$ (see Algorithm \ref{alg:flexible-condition1}). If it is, then return $\succ_j$.
\item Return ``no consensus; Flexible Condition~1 is violated for all candidates''.
\end{enumerate}
\end{flushleft}
\caption{
\label{alg:flexible-level1consensus}
Detection of Flexible Consensus 
}
\end{algorithm}
\vskip -8mm
\end{figure}

\label{sec:algorithm-1-flexible}
Checking the existence of Flexible Consensus is very similar to checking Level-1 Consensus. The check is presented in Algorithm \ref{alg:flexible-level1consensus}. 
It is very similar to Algorithm \ref{alg:level1consensus}; the only differences are that we do not need to check Condition 2 (since it does not exist in Flexible-Level-1-Consensus), and instead of checking Condition 1 for each candidate, we check Flexible-Condition-1.

The procedure for checking Flexible Condition~1
is presented as Algorithm \ref{alg:flexible-condition1}.
It is very similar to the one for checking Condition~1 in Algorithm \ref{alg:condition1}. 
There are two differences: the inequality that leads to the failure of the procedure is $\dist{i} > \dist{i+1}$ (instead of $\dist{i} \geq \dist{i+1}$), 
and in the last step we have to check that 
there is no preference outside of $\pi$
whose distance to $\succ_0$ is less than  $\widehat{d}$ (instead of less-than-or-equal-to $\widehat{d}$). 
Hence, by following the same proof as in Section \ref{sec:algorithm-1}, it is easy to prove:
\begin{theorem}
\label{thm:flexible}
Algorithm \ref{alg:flexible-level1consensus} checks whether a profile $\pi$ exhibits Flexible Consensus in time:
\begin{align*}
O(n'^2 K\sqrt{\log K} + n'^2 \log{n'})
\end{align*}
where $n'=\min(K!,n)$.
\end{theorem}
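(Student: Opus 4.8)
The plan is to mirror the proof of Theorem~\ref{thm:level1consensus} essentially line-for-line, since Algorithm~\ref{alg:flexible-level1consensus} has the same control structure as Algorithm~\ref{alg:level1consensus}, differing only in which sub-routine it calls and in the omission of the Condition~2 check. I would first argue correctness, then re-run the run-time analysis.

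For correctness, I would first establish the analogue of Lemma~\ref{lem:cond1} for Flexible Condition~1: if Flexible Condition~1 is violated for some pair of preferences in $\pi$, then it is violated for an \emph{adjacent} pair $\succ_i,\succ_{i+1}$ under the same lexicographic ordering (descending frequency, then ascending distance from $\succ_0$). The argument is the same transitivity/interval-scanning argument as in Lemma~\ref{lem:cond1}: a violation means there are $\succ',\succ$ with $\mu_\pi(\succ')>\mu_\pi(\succ)$ but $d(\succ',\succ_0) > d(\succ,\succ_0)$; walking along the ordered list from the position of $\succ'$ to the position of $\succ$, at some adjacent step the frequency strictly drops while the distance weakly increases (here ``weakly'' suffices because the failure test is now $\dist{i} > \dist{i+1}$, not $\geq$). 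Given this lemma, Step~2 of Algorithm~\ref{alg:flexible-condition1} correctly detects all violations among preferences \emph{in} $\pi$. For preferences \emph{outside} $\pi$ (frequency $0$), Flexible Condition~1 requires that any $\succ$ with $\mu_\pi(\succ)=0$ and $d(\succ,\succ_0) < \widehat d$ cannot exist, i.e.\ every preference at distance strictly less than $\widehat d = \dist{n'}$ must already appear in $\pi$; this is exactly the check $\sum_{j=0}^{\widehat d - 1}|T(K,j)| = n'$ (strict inequality in the summation upper limit, as noted in the text), using the same Mahonian-number recurrence. Finally, the candidate-enumeration step is justified as before: taking $\succ = \succ_0$ shows $\succ_0$ must have maximal frequency, so it suffices to try the maximal-frequency preferences $\succ_1,\ldots,\succ_h$; and Condition~2 need not be checked because Flexible Consensus (Definition~\ref{def:flexible-consensus}) does not include it.

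For the run-time, the analysis of Algorithm~\ref{alg:condition1} carries over verbatim: computing the $n'-1$ inversion distances costs $O(n' K\sqrt{\log K})$ via \cite{Chan2010Counting}, sorting the $n'$ distinct preferences costs $O(n'\log n')$, the single scan is $O(n')$, and the Mahonian sum is computed once by dynamic programming over a $K \times \binom{K}{2}$ table, which is dominated by the other terms; so Algorithm~\ref{alg:flexible-condition1} runs in $O(n' K\sqrt{\log K} + n'\log n')$. In the worst case Algorithm~\ref{alg:flexible-level1consensus} calls it once per maximal-frequency candidate, hence at most $n'$ times, giving the claimed $O(n'^2 K\sqrt{\log K} + n'^2\log n')$, with $n' = \min(K!,n)$ since there are at most $K!$ distinct preferences and at most $n$ voters.

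I do not expect a genuine obstacle here; the only point requiring a little care is getting the boundary conditions of the two modifications exactly right — the strict-versus-weak failure test in the scan and the strict-versus-weak cutoff $\widehat d$ in the zero-frequency check — and verifying that these two changes are precisely what converts ``Condition~1'' into ``Flexible Condition~1'' at the level of the scatter-plot criterion (non-overlapping, \emph{weakly} decreasing intervals rather than \emph{strictly} decreasing ones). Everything else is an unchanged re-run of Section~\ref{sec:algorithm-1}, which is why the theorem can be asserted ``by following the same proof.''
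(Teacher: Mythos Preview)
Your approach is exactly the paper's: the paper itself proves the theorem simply by saying ``by following the same proof as in Section~\ref{sec:algorithm-1}'' after noting the two modifications (strict-versus-weak failure test in the scan, and the $\widehat d-1$ cutoff in the zero-frequency check), and your write-up fills in precisely that outline. One small slip worth fixing: in the zero-frequency step you compare $\sum_{j=0}^{\widehat d-1}|T(K,j)|$ to $n'$, but it must be compared to $n^*$, the number of preferences \emph{in $\pi$} at distance at most $\widehat d-1$ (as Algorithm~\ref{alg:flexible-condition1} does), since $n'$ also counts the preferences at distance exactly $\widehat d$ --- this is exactly the boundary-condition care you yourself flagged.
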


\begin{figure}
\begin{algorithm}[H]
\begin{flushleft}
INPUT: 
\begin{itemize}
\item A profile $\pi$ containing $n$ preference-relations on $A$ (possibly with duplicates).
\item A preference $\succ_0$.
\end{itemize}

OUTPUT: ``True'' if Flexible Condition~1 is satisfied for $\pi$ and $\succ_0$. ``False'' otherwise.

ALGORITHM:

\begin{enumerate}
\item Order preferences with the same frequency by ascending inversion-distance from $\succ_0$.
\item For $i = 1, 2, \ldots, n'-1$:
\\
If $\freq{i} ~ > ~ \freq{i+1}$  and  $\dist{i} > \dist{i+1}$,
 return False.
\item Set $\widehat{d}:=\dist{n'}$. Define $n^*$ as the number of profiles in $\pi$ whose distance to $\succ_0$ is at most $\widehat{d}-1$.
\hskip 1cm
If $\sum_{j=0}^{\widehat{d}-1} T(K,j) = n^*$, return True.
\hskip 1cm
Else, return False.
\end{enumerate}
\end{flushleft}
\caption{
\label{alg:flexible-condition1}
Check if Flexible Condition~1 is satisfied
}
\end{algorithm}
\end{figure}

\section{Probability of Flexible Consensus}
We applied Algorithm \ref{alg:flexible-level1consensus} to the same experimental settings described in Section \ref{sec:prob} and estimated the probability of having Flexible Consensus. Out of the 315 PrefLib profiles, \textbf{39} exhibit Flexible Consensus. All 39 profiles are from the dataset labeled ``ED-00004 1--100'', where all profiles have 3 alternatives. This means that 39\% of all these profiles with 3 alternatives exhibited the Flexible Consensus (in contrast to \textbf{0} which exhibited level-1 consensus). 

The results of the experiments on random profiles are shown in Figure \ref{fig:mallows} in page \pageref{fig:mallows}; it is evident that in all settings, including the most challenging setting of impartial culture ($\phi=1$), Flexible Consensus is substantially more probable than level-1 consensus.

We complement the empirical findings with a theoretical analysis of the asymptotic probability of Flexible Consensus under the impartial culture assumption. 
We consider a profile generated by the random process 
described before Proposition \ref{prop:pconsensus}
in page \pageref{prop:pconsensus}. 
We denote by 
$\pfcon(m,K)$ 
the probability that such a random profile  exhibits Flexible Consensus. 
In stark contrast to Proposition \ref{prop:pconsensus}, we prove that this probability is always larger than a positive constant, even when the number of voters goes to infinity:
\begin{proposition}
\label{prop:pflexibleconsensus}
For every $K$ there exists a constant $C_K>0$ such that:
\begin{align*}
\forall m: \pfcon(m,K) \geq C_K
\end{align*}
\end{proposition}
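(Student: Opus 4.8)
The plan is to exhibit a single, explicit event whose probability stays bounded away from zero as $m\to\infty$, and on which Flexible Consensus provably holds. The most natural candidate is the event that one preference, say the enumeration-first preference $\succ_1=\succ_0$, is strictly the most frequent: $B_1 > B_i$ for all $i\neq 1$. I claim that on this event, $\pi$ exhibits Flexible Consensus around $\succ_0$. Indeed, Flexible Condition~1 demands that $\mu_\pi(\succ') > \mu_\pi(\succ)$ implies $d(\succ',\succ_0) \leq d(\succ,\succ_0)$; the only way this can fail is to have two preferences $\succ',\succ$ with $\mu_\pi(\succ') > \mu_\pi(\succ)$ but $d(\succ',\succ_0) > d(\succ,\succ_0)$. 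Since distances are integers in $\{0,\dots,\binom{K}{2}\}$ and $\succ_0$ is the unique preference at distance $0$, if $\succ' \neq \succ_0$ then $d(\succ',\succ_0)\geq 1$; so a violation requires $\succ = \succ_0$, i.e. $\mu_\pi(\succ') > \mu_\pi(\succ_0) = B_1$ for some $\succ'$, contradicting that $B_1$ is the strict maximum. Hence on the event $\{B_1 > B_i \ \forall i\neq 1\}$, Flexible Condition~1 holds and we are done; so $\pfcon(m,K) \geq \operatorname{Prob}[B_1 > B_i \ \forall i\neq 1]$.

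It then remains to lower-bound $\operatorname{Prob}[B_1 > B_i \ \forall i\neq 1]$ by a positive constant independent of $m$. First I would note that by symmetry (all $B_i$ i.i.d.), if there were no ties the probability that $B_1$ is the unique maximum would be exactly $1/K!$; ties only reduce this, so I need to control the tie probability, or — more robustly — directly lower-bound the strict-max event. A clean way: let $B_{(1)}, B_{(2)}$ denote the largest and second-largest order statistics; then $\operatorname{Prob}[B_1\text{ is strict max}] = \operatorname{Prob}[B_1 = B_{(1)} > B_{(2)}]$, and by exchangeability this equals $\tfrac{1}{K!}\operatorname{Prob}[B_{(1)} > B_{(2)}]$ (conditioned on the strict-max event there is a unique argmax, uniformly distributed among the $K!$ indices). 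So it suffices to show $\liminf_{m\to\infty}\operatorname{Prob}[B_{(1)} > B_{(2)}] > 0$. By the CLT/local CLT, each $B_i$ is approximately $\mathcal N(\mu, \sigma^2)$ with $\mu = m/K!$, $\sigma^2 = m\cdot\tfrac{1}{K!}(1-\tfrac1{K!})\to\infty$; since these are $K!$ essentially-independent continuous-like variables with a diverging spread, the probability of a tie at the top goes to $0$, so $\operatorname{Prob}[B_{(1)} > B_{(2)}]\to 1$. Concretely, one can bound $\operatorname{Prob}[B_{(1)} = B_{(2)}] \leq \sum_{i<j}\operatorname{Prob}[B_i = B_j]$, and each term is $\operatorname{Prob}[B_i = B_j] = \peq(m,1/K!,2) \approx 1/\sqrt{4\pi \tfrac{1}{K!}(1-\tfrac1{K!}) m} \to 0$ by the approximation for $\peq$ established in the appendix (the same one used in Proposition~\ref{prop:pconsensus}). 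Hence for all sufficiently large $m$, $\operatorname{Prob}[B_{(1)} > B_{(2)}] \geq 1/2$, giving $\pfcon(m,K) \geq \tfrac{1}{2K!}$; and since $\pfcon(m,K) > 0$ for every individual small $m$ as well (e.g. the all-$\succ_0$ profile, or direct inspection), taking $C_K$ to be the minimum over the finitely many small $m$ together with $\tfrac{1}{2K!}$ yields the constant.

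The main obstacle is the tie-handling: unlike the Gaussian heuristic, the $B_i$ are genuinely discrete, so "$B_1$ is the maximum" is not the same as "$B_1$ is the \emph{strict} maximum," and Flexible Condition~1 as I have used it really does need the strict maximum (a tie $B_1 = B_j$ with $j\neq1$ would put a distance-$\geq 1$ preference at the same frequency as $\succ_0$, which is \emph{allowed} by Flexible Condition~1 actually — so in fact ties for the top are harmless for the $\succ=\succ_0$ case, but a violation still only needs \emph{some} $\succ'$ strictly more frequent than $\succ_0$). Let me re-examine: Flexible Condition~1 is violated only if some $\succ'$ has $\mu_\pi(\succ') > \mu_\pi(\succ_0)$, i.e. only if $\succ_0$ is \emph{not} a maximum-frequency preference. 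So the event I actually need is merely $\{B_1 = \max_i B_i\}$ (weak maximum suffices!), whose probability is at least $1/K!$ by symmetry for \emph{every} $m$ — no CLT needed at all. Thus the clean argument is: $\operatorname{Prob}[B_1 = \max_i B_i] \geq 1/K!$ by exchangeability (for at least one index the weak max is attained, and all are equally likely to be among the argmaxes — more carefully, $\sum_i \operatorname{Prob}[B_i = \max] \geq 1$ and all these probabilities are equal, so each is $\geq 1/K!$), and on this event $\succ_0$ is max-frequency so Flexible Condition~1 holds. Hence $\pfcon(m,K) \geq 1/K! =: C_K$ for all $m$, with no asymptotic analysis required; the CLT discussion above is then an optional remark showing the bound is in fact closer to $1/K!$ than this crude estimate suggests. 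The only subtlety to get right in writing this up is the exchangeability bookkeeping for $\operatorname{Prob}[B_i \text{ is a weak maximum}]$, which I would phrase via $1 \leq \sum_{i} \mathbf{1}[B_i = \max_j B_j]$ in expectation.
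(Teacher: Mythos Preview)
Your proof has a genuine gap. The central claim --- that if $\succ_0$ is a (weak) maximum-frequency preference then Flexible Condition~1 holds around $\succ_0$ --- is false. Flexible Condition~1 is quantified over \emph{all} pairs $\succ,\succ'\in\prefs$, not only pairs involving $\succ_0$. Your inference ``so a violation requires $\succ=\succ_0$'' does not follow: a violation merely requires $d(\succ',\succ_0)>d(\succ,\succ_0)$ together with $\mu_\pi(\succ')>\mu_\pi(\succ)$, and there are many such pairs with neither preference equal to $\succ_0$. Concretely, take $K=3$, $\succ_0=(a\succ b\succ c)$ with frequency $10$, $\succ'=(c\succ b\succ a)$ (distance $3$ from $\succ_0$) with frequency $5$, and $\succ=(a\succ c\succ b)$ (distance $1$) with frequency $3$. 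Here $\succ_0$ is the unique maximum, yet $\mu_\pi(\succ')>\mu_\pi(\succ)$ while $d(\succ',\succ_0)=3>1=d(\succ,\succ_0)$, so Flexible Condition~1 fails. Thus the event $\{B_1=\max_i B_i\}$ does \emph{not} force Flexible Consensus, and your bound $\pfcon(m,K)\geq 1/K!$ is unsupported.

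What is actually needed is control of the \emph{entire} ordering of frequencies relative to distances, which is what the paper does. After neglecting ties (whose probability tends to $0$), one conditions on $\succ_0$ being the unique most-frequent preference; Flexible Condition~1 then says precisely that, grouping the remaining $K!-1$ preferences into levels by distance, every frequency at level $i$ exceeds every frequency at level $j$ whenever $i<j$. By exchangeability of the $B_i$'s, all $(K!-1)!$ orderings of these remaining frequencies are equally likely, and exactly $\prod_{i\geq 1}|T(K,i)|!$ of them respect the level constraint, yielding the positive lower bound $\prod_{i\geq 1}|T(K,i)|!/(K!-1)!$. Your exchangeability instinct is the right tool, but it has to be applied to the full permutation of frequencies, not just to the location of the maximum.
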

\begin{proof}
As explained in the proof of Proposition \ref{prop:pconsensus}, the probability that two or more preferences have exactly the same frequency goes to 0 when $m\to\infty$, so for simplicity we neglect this possibility and assume that each preference relation has a different frequency.   
Note that this assumption can only decrease the probability of Flexible Consensus, since it implies that there is 
a unique preference with maximum frequency,
so there is only one candidate that can possibly satisfy Flexible Condition 1.
We denote this candidate by $\succ_0$.
Below we calculate the probability that Flexible Condition~1 holds for this preference. 

For every $i \geq 1$, define:
\begin{align*}
F_i ~ := ~\{~\mu_\pi(\succ)~ |~ d(\succ,\succ_0) =  i~\}
\end{align*}
so $F_i$ contains the frequencies of all preferences whose distance from $\succ_0$ is exactly $i$. 
Note that $F_i$ is non-empty only when $i\leq {K\choose 2}$, since ${K\choose 2}$ is the maximum possible inversion-distance between two preferences on $K$ alternatives.

Flexible Condition~1 is equivalent to the requirement that each member of the set $F_i$ is larger than each member of the set $F_j$, for every $i<j$.
\footnote{
In fact, it is equivalent to 
the requirement that each member of $F_i$ 
is larger than \emph{or equal to} each member of $F_j$, 
but as explained above, we neglect the possibility that two preferences have equal frequency.
} 
However, it does not impose any restrictions on the frequencies \emph{within} $F_i$.

Let $F := \cup_{i} F_i = $ the set of frequencies of the $K!-1$ preferences different than $\succ_0$. The total number of different orders on $F$ is $|F|!$. The total number of orders that satisfy Flexible Condition~1 is $|F_1|!\cdot|F_2|!\cdots|F_{K\choose 2}|!$. Since all preferences are equally likely, all $|F|!$ orders are equally likely. Therefore, the probability that the order of frequencies satisfies Flexible Condition~1 is at least:
\begin{align}
\label{eq:cond1'}
\frac{
|F_1|! |F_2|! \cdots|F_{K\choose 2}|!
}
{
|F|!
}
\end{align}
which is a positive constant that does not approach 0 even when $m\to\infty$.
\end{proof}

As an illustration, we calculate the lower bound for $K=3$ alternatives. In this case we have $|F_1|=2$ and $|F_2|=2$ and $|F_3|=1$ and $|F|=2+2+1=5$. Therefore, the probability that Flexible Condition~1 is satisfied in a random impartial-culture profile is at least $2!\cdot 2! \cdot1! / 5! = 1/30\approx 0.033$. 
In our experiments with $\phi=1$, the fraction of profiles with Flexible Consensus (in 1000 experiments) was $0.045$ for 100 voters and $0.043$ for 1000 voters. This is slightly higher than the lower bound of $0.033$, which can be explained by the fact that, when $n$ is finite, there is a positive probability that two preferences have the same frequency.

When $K>3$, the asymptotic probability of Flexible Condition~1 in an impartial culture remains positive, though much lower. For example, for $K=4$ the lower bound is approximately $10^{-12}$. 
This is consistent with the fact that we found no profiles that exhibit Flexible Consensus in our experiments with $\phi=1$ and $K\geq 4$.

\section{Conclusion}
We presented a practical procedure for checking whether a preference-profile exhibits a level-1 consensus. 
We realized that this property is highly improbable, and found a weaker property, Flexible Consensus, that preserves the desirable stability properties of the social choice. Furthermore, Flexible Consensus is provably more likely than level-1 consensus. 
This was demonstrated both theoretically (for the impartial culture setting) and empirically and over a database of real-world profiles.

Our experiments can be reproduced by re-running the code, which is freely available through the following GitHub fork:
https://github.com/erelsgl/PrefLib-Tools .

\begin{acknowledgments}
We are grateful to the anonymous referees for insightful comments that led to this much improved version of the manuscript, and to Marc Timme and Yuval Benjamini for helpful and valuable discussions.

We are thankful to 
several users of the math.stackexchange.com website, in particular 
Hernan J. Gonzalez, Vineel Kumar Reddy Kovvuri, spaceisdarkgreen and zhw.,  for their kind help in coping with mathematical issues.

M.N. is grateful to the Azrieli Foundation for the award of an Azrieli Fellowship. 
E.S. is grateful to the Israeli Science Foundation for the ISF grant 1083/13.
\end{acknowledgments} 

\appendix
\section*{APPENDICES}
\section{To verify Condition 1, it is sufficient to check adjacent preferences}
\label{sec:proof1}
This section provides a formal proof to the following lemma used in subsection \ref{sec:algorithm-1}.

\noindent
\textbf{Lemma \ref{lem:cond1}.}
Suppose that the preference relations in a profile $\pi$ are ordered by two criteria: first by frequency $\freq{i}$, then by distance $\dist{i}$, where $\succ_0$ is a fixed preference. In this ordering, if Condition~1 is violated for \emph{any} pair of preference relations in $\pi$, then it is violated for an adjacent pair $\succ_{i},\succ_{i+1}$ for some $i$.
\begin{proof}
	Suppose that there exist indices $i<j$ such that Condition~1 is violated for the pair $\succ_i,\succ_j$, i.e,  $\freq{i}>\freq{j}$, yet $\dist{i}\geq \dist{j}$. We now prove the lemma by induction on the difference of ​indices, $j-i$.
	
	\textbf{Base}: If $j-i=1$, then  $\succ_i$ and $\succ_j$ are already adjacent, so we are done.
	
	\textbf{Step}: Suppose that $j-i>1$. We prove that there exists a pair with a smaller difference that violates Condition~1. We consider several cases.
	
	\emph{Case \#1}: $i$ is not the largest index in its equivalence class. i.e, there exists $i'> i$ such that $\freq{i'}=\freq{i}$. Then, by the secondary ordering criterion, $\dist{i'}\geq \dist{i}$,  Condition~1 is violated for the pair $\succ_{i'}$ and $\succ_{j}$.
	
	\emph{Case \#2}: $j$ is not the smallest index in its equivalence class. i.e, there exists $j'<j$ such that $\freq{j'}=\freq{j}$. Then, by the secondary ordering criterion, $\dist{j}\geq \dist{j'}$,  Condition~1 is violated for the pair $\succ_{i}$ and $\succ_{j'}$.
	
	Otherwise, $i$ is the largest index in its equivalence class, $j$ is the smallest index in its equivalence class, but still $i+1<j$. This means that the equivalence classes of $i$ and $j$ are not adjacent, i.e, $\freq{i}>\freq{i+1}>\freq{j}$. Now there are two remaining cases:
	
	\emph{Case \#3}: $\dist{i}\geq \dist{i+1}$, in which case Condition~1 is violated for the adjacent pair $\succ_{i}$ and $\succ_{i+1}$ and we are done.
	
	\emph{Case \#4}: $\dist{i+1} > \dist{i}$. This implies $\dist{i+1} > \dist{j}$, so Condition~1 is violated for the pair $\succ_{i+1}$ and $\succ_{j}$ and we are done.
\end{proof}

\section{The effect of a switch on the inversion-distance}
\label{sec:switch}
This section provides a formal proof to an intuitive claim made within the proof of Lemma \ref{lem:cond1'}.
\hskip 1cm
Let $a,b$ be two fixed alternatives. Let $C(a>b)$ be the set of preferences by which $a\succ b$ and $C(b>a)$ the set of preferences by which $b\succ a$. Let $\switch:C(b>a)\to C(a>b)$ be a function that takes a preference-relation and creates a new preference-relation by switching the position of $a$ and $b$ in the ranking. 

\begin{lemma}
If $a \succ_0 b$, 
then for every preference $\succ_1\in C(b>a)$:
\begin{align*}
d(\succ_1,\succ_0)
~ > ~ 
d(\switch(\succ_1 ),\succ_0)
\end{align*}
\end{lemma}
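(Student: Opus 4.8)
The plan is to decompose the inversion distance to $\succ_0$ according to which pair of alternatives is involved, and to show that every term is unchanged by the switch except the one corresponding to the pair $\{a,b\}$, which strictly decreases. Concretely, for any preference $\succ$ and any pair $\{x,y\}\subseteq A$, let $\chi_{\succ_0}(\succ;x,y)$ be the indicator that $\succ$ and $\succ_0$ rank $x,y$ in opposite orders. Then $d(\succ,\succ_0) = \sum_{\{x,y\}} \chi_{\succ_0}(\succ;x,y)$, the sum ranging over all $\binom{K}{2}$ unordered pairs.

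The key step is to compare $\chi_{\succ_0}(\succ_1;x,y)$ with $\chi_{\succ_0}(\switch(\succ_1);x,y)$ pair by pair. I would split into three cases. First, if $\{x,y\}\cap\{a,b\}=\emptyset$, then the switch does not move $x$ or $y$ relative to each other, so the two indicators agree. Second, if exactly one of $x,y$ is in $\{a,b\}$, say $x\in\{a,b\}$ and $y\notin\{a,b\}$: here one needs the elementary observation that swapping the positions of $a$ and $b$ while leaving everything else fixed does not change how $y$ compares to the set $\{a,b\}$ as a block — more carefully, this is where a small amount of care is required, since $a$ could be above $y$ and $b$ below $y$. The clean way is to note that $\switch$ is exactly the composition "transpose $a$ and $b$ in the list", and to argue that for the pair $\{b,y\}$ in $\succ_1$ the relation $b$ vs.\ $y$ becomes the relation $a$ vs.\ $y$ in $\switch(\succ_1)$, and symmetrically; combined with the fact that $\succ_0$ ranks $a$ above $b$, a short case analysis shows the sum of the two indicators over the pairs $\{a,y\}$ and $\{b,y\}$ is the same for $\succ_1$ and $\switch(\succ_1)$. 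Third, the pair $\{a,b\}$ itself: since $\succ_1\in C(b>a)$ we have $b\succ_1 a$, and since $a\succ_0 b$ this pair is inverted, so $\chi_{\succ_0}(\succ_1;a,b)=1$; after the switch $a\succ b$ in $\switch(\succ_1)$, matching $\succ_0$, so $\chi_{\succ_0}(\switch(\succ_1);a,b)=0$.

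Summing over all pairs, the contributions from the first and second cases cancel between $\succ_1$ and $\switch(\succ_1)$, while the $\{a,b\}$ term contributes $1$ to $d(\succ_1,\succ_0)$ and $0$ to $d(\switch(\succ_1),\succ_0)$. Hence $d(\succ_1,\succ_0) = d(\switch(\succ_1),\succ_0) + 1 > d(\switch(\succ_1),\succ_0)$, which is the claim (and in fact gives the sharper statement that the distance drops by exactly one).

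I expect the main obstacle to be the "exactly one endpoint in $\{a,b\}$" case: it is tempting to assert that $y$'s relation to $a$ and to $b$ is simply swapped, but that is only literally true when $a$ and $b$ are adjacent in the ranking, and in general $y$ may sit strictly between them. The resolution is that $\switch$ acts on positions, not on a single adjacent transposition: it relocates $a$ to $b$'s slot and $b$ to $a$'s slot. Under this, for each $y\notin\{a,b\}$, whichever of $a,b$ was above $y$ in $\succ_1$, the other one is above $y$ in $\switch(\succ_1)$ — so the multiset of truth values $\{[y\text{ vs.\ }a],[y\text{ vs.\ }b]\}$ in $\succ_1$ equals that in $\switch(\succ_1)$, and since $\succ_0$ is one fixed relation, the sum $\chi_{\succ_0}(\,\cdot\,;a,y)+\chi_{\succ_0}(\,\cdot\,;b,y)$ is preserved. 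Making this last sentence rigorous is the heart of the proof; everything else is bookkeeping.
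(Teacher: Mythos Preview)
Your decomposition by pairs is the same approach the paper takes, and your treatment of the pair $\{a,b\}$ and of pairs disjoint from $\{a,b\}$ is correct. The gap is in the ``exactly one endpoint in $\{a,b\}$'' case, precisely at the inference you yourself flag as ``the heart of the proof''. You correctly observe that the multiset $\{[y\text{ vs.\ }a],[y\text{ vs.\ }b]\}$ is the same in $\succ_1$ as in $\switch(\succ_1)$. But it does \emph{not} follow that the sum $\chi_{\succ_0}(\,\cdot\,;a,y)+\chi_{\succ_0}(\,\cdot\,;b,y)$ is preserved: the indicator $\chi_{\succ_0}(\,\cdot\,;a,y)$ compares the $a$-versus-$y$ relation in the argument to the $a$-versus-$y$ relation in $\succ_0$, and similarly for $b$; since $\succ_0$ may place $y$ strictly between $a$ and $b$, swapping the roles of $a$ and $b$ in the argument while keeping $\succ_0$ fixed can flip both indicators in the same direction. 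Concretely, with $K=3$, $\succ_0:\,a\succ_0 y\succ_0 b$ and $\succ_1:\,b\succ_1 y\succ_1 a$, we have $\switch(\succ_1)=\succ_0$, so $d(\succ_1,\succ_0)=3$ and $d(\switch(\succ_1),\succ_0)=0$. The distance drops by $3$, not by $1$, and your ``sharper statement'' is false.

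The repair is exactly what the paper does: when $y$ lies between $b$ and $a$ in $\succ_1$ (so $b\succ_1 y\succ_1 a$), case-split on the position of $y$ in $\succ_0$. If $a\succ_0 y\succ_0 b$, both indicators for $\{a,y\}$ and $\{b,y\}$ go from $1$ to $0$, contributing $+2$ to the difference $d(\succ_1,\succ_0)-d(\switch(\succ_1),\succ_0)$; in the remaining sub-cases ($y\succ_0 a$ or $b\succ_0 y$) the two indicators trade off and the net contribution is $0$. Either way the contribution is nonnegative, and together with the $+1$ from the pair $\{a,b\}$ this yields the strict inequality. So your overall strategy is sound, but the specific preservation claim you lean on is false and must be replaced by this weaker nonnegative-contribution statement; once you do that, you recover the paper's proof essentially verbatim.
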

\begin{proof}
For every preference $\succ$, define $D(\succ,\succ_0)$ as the set of pairs-of-alternatives $\{i,j\}$ that are ranked differently in $\succ$ and in $\succ_0$. By definition, the inversion distance is the cardinality of this set, $
d(\succ,\succ_0) = |D(\succ,\succ_0)|$.
Therefore, it is sufficient to show that there are more pairs in $D(\succ_1,\succ_0)$ than in $D(\switch(\succ_1),\succ_0)$. To show this, we consider all possible pairs-of-alternatives; for each pair, we calculate its contribution to the difference in cardinalities $|D(\succ_1,\succ_0)|-|D(\switch(\succ_1),\succ_0)|$, and show that the net difference is positive.
\begin{itemize}
\item The pair $\{a,b\}$ is in $D(\succ_1,\succ_0)$ but not in $D(\switch(\succ_1),\succ_0)$, so this pair contributes $+1$ to the difference.
\item Any pair that contains neither $a$ nor $b$ is not affected by the switch. I.e, each pair $\{c,e\}$ where $c,e\neq a,b$ is in $D(\succ_1,\succ_0)$ if-and-only-if it is in $D(\switch(\succ_1),\succ_0)$, so it contributes $0$ to the difference.
\item 
Let $c$ be an alternative that is ranked by $\succ_1$ above $b$ or below $a$, i.e, either
$c\succ_1 b\succ_1 a$ or $b\succ_1 a\succ_1 c$. 
Then, the order between $c$ to $a$ or $b$ in $\succ_1$ is not affected by the switch, so $\{c,a\}$ 
is in 
$D(\succ_1,\succ_0)$ if-and-only-if it is in $D(\switch(\succ_1),\succ_0)$, so it contributes $0$ to the difference. The same is true for $\{c,b\}$.
\item 
Finally, let $c$ be an alternative that is ranked by $\succ_1$ between $a$ and $b$, i.e, 
$b\succ_1 c\succ_1  a$. 
Then, the switch $\switch$ changes the direction of both the pair $\{c,a\}$ and the pair $\{c,b\}$:
\begin{itemize}
\item If $c \succ_0 a \succ_0 b$, then the pair $\{c,a\}$ is in $D(\succ_0,\switch(\succ_1))$ but not in $D(\succ_0,\succ_1)$, and the pair $\{c,b\}$ is in $D(\succ_0,\succ_1)$ but not in $D(\succ_0,\switch(\succ_1))$, so these pairs contribute $+1-1=0$ to the difference.
\item If $a \succ_0 c \succ_0 b$, then both the pair $\{c,a\}$ and the pair $\{c,b\}$ are in $D(\succ_0,\succ_1)$ but not in $D(\succ_0,\switch(\succ_1))$, so these pairs contribute $+1+1=2$ to the difference.
\item If $a \succ_0 b \succ_0 c$, then the pair $\{c,a\}$ is in $D(\succ_0,\succ_1)$ but not in $D(\succ_0,\switch(\succ_1))$, and the pair $\{c,b\}$ is in $D(\succ_0,\switch(\succ_1))$ but not in $D(\succ_0,\succ_1)$, so these pairs contribute $+1-1=0$ to the difference.
\end{itemize}
\end{itemize}
We proved that the contribution of each pair of alternatives is at least $0$, and the contribution of the pair $\{a,b\}$ is $+1$. Therefore, the difference 
$|D(\succ_1,\succ_0)|-|D(\switch(\succ_1),\succ_0)|$
 is positive and the lemma is proved.
\end{proof}

\newpage
\section{Probability that binomial random variables are equal}
\label{sec:binomial}
This section justifies the following approximation, used in Section \ref{sec:prob}.

Let $B_1,\ldots,B_t$ be i.i.d. random variables distributed binomially with $m$ trials and success-probability $p < 1/2$.
Let $q:=1-p$. 
Define:
\begin{align*}
\peq(m,p,t) := \Pr[B_1 = \cdots = B_t]
\end{align*}
Then, for every $t\geq 1$, when $m$ is sufficiently large, 
\begin{align*}
\peq(m,p,t)
&\approx
{1
\over 
(\sqrt{2\pi p q m})^{t-1}
}
\end{align*}

\begin{proof}
The value of each of the variables $B_i$ can be any integer between 0 and $m$. Therefore we can present $\peq$ as a sum of probabilities of disjoint events:
\begin{align*}
\peq(m,p,t) = 
\sum_{i=0}^m
\Pr[B_1 = \cdots = B_t = i]
\end{align*}
Since the $B_i$ are i.i.d:
\begin{align*}
\peq(m,p,t) &= 
\sum_{i=0}^m
\left(\Pr[B_1 = i]\right)^t
\\
&=
\sum_{i=0}^m
\left({m\choose i} p^i q^{m-i}\right)^t
\end{align*}
where $q:=1-p$.
Using Stirling's approximation, when $m,i,m-i$ are sufficiently large:
\begin{align*}
{m\choose i}&\approx \sqrt{m\over 2 \pi i (m-i)}\cdot {m^m\over i^i (m-i)^{m-i}}
\end{align*}
Substitute this in $\peq$ and approximate the sum by an integral:
\begin{align*}
\peq(m,p,t) &\approx
\int_{y=0}^m
\left(\sqrt{m\over 2 \pi y (m-y)}\cdot {m^m\over y^y (m-y)^{m-y}}\cdot p^y q^{m-y}\right)^t dy
\end{align*}
Substitute $y = m x$ and $dy = m dx$:
\begin{align*}
\peq(m,p,t) &\approx
m\int_{x=0}^1
\left(\sqrt{1 \over 2 \pi m x (1-x)}\cdot {p^{m x} q^{m-mx}\over x^{m x} (1-x)^{m-mx}}\right)^t dx
\\
&=
m^{1-{t/ 2}}\int_{x=0}^1
\left(\sqrt{1 \over 2 \pi x (1-x)}\cdot {p^{m x} q^{m-mx}\over x^{m x} (1-x)^{m-mx}}\right)^t dx
\end{align*}

The integral can be approximated by Laplace's method. Define:
\begin{align*}
h(x) &:=  
\left(\sqrt{1 \over 2 \pi x (1-x)}\right)^t
\\
g(x) &:= 
t \cdot 
\bigg[
x \ln({p\over x})
+
(1-x) \ln({q\over 1-x})
\bigg]
\end{align*}
Then:
\begin{align*}
\peq(m,p,t) &\approx
m^{1-{t / 2}} \int_{x=0}^1
h(x)
e^{m g(x)} dx
\end{align*}
The function $g(x)$ is twice continuously differentiable on $(0,1)$ and has a unique maximum at $x_0={p\over p+q} = p$; the maximum value is $g(x_0) = 0$.
Moreover, $g''(x_0) = -{1 \over p q}< 0$. 
Therefore, by Laplace's method:
\begin{align*}
\peq(m,p,t)
&\approx
m^{1-t/2} \sqrt{
2\pi \over 
- m g''(x_0)
}
\cdot
h(x_0)
\cdot
e^{m g(x_0)}
\end{align*}
where the symbol $\approx$ means that the ratio between the expressions in its two sides goes to $1$ as $m\to\infty$. Substituting the functions $g$ and $h$ gives:
\begin{align*}
\peq(m,p,t)
&\approx
m^{1-t/2} \sqrt{
2\pi \over 
m / (p q)
}
\cdot
\left({1\over 2\pi p q}\right)^{t/2}\cdot e^0
\\
&=
m^{(1-t)/2}
\cdot 
(2\pi p q)^{(1-t)/2}
\\
&=
(2\pi p q m)^{(1-t)/2}
\\
\peq(m,p,t)
&\approx
{1
\over 
(\sqrt{2\pi p q m})^{t-1}
}
\end{align*}
\end{proof}
Note that $\peq(m,p,1) = 1$, which is trivially true, since a single random variable always equals itself.
When $t\geq 2$, $\peq(m,p,1)\to 0$ as $m\to\infty$.
\newpage
\section*{References}
\bibliographystyle{apalike}
\bibliography{consensus}
\end{document}